\newcommand{\IZ}{\mathbb{Z}}
\newcommand{\IC}{\mathbb{C}}
\newtheorem{theorem}{Theorem}
\newtheorem{proposition}{Proposition}
\newtheorem{corollary}{Corollary}
\numberwithin{equation}{section}
\begin{document}

\title{Symmetries in the third Painlev\'e equation
arising from the modified Pohlmeyer-Lund-Regge hierarchy}
\author{Tetsuya Kikuchi \\
Graduate School of Mathematical Sciences, 
the University of Tokyo,  \\
Komaba, Meguro-ku, Tokyo 153-8914, Japan}
\date{\today}

\maketitle

\begin{abstract}

We propose a 
modification of the AKNS hierarchy
that
includes the ``modified'' Pohlmeyer-Lund-Regge (mPLR) equation.
Similarity reductions of 
this hierarchy give the second, third, and fourth Painlev\'e equations.
Especially, 
we present a new Lax representation 
and a complete description of the symmetry 
of the third Painlev\'e equation
through the similarity reduction.
We also show the relation 
between the tau-function of the mPLR hierarchy and 
Painlev\'e equations.

\end{abstract}


\section{Introduction}

Painlev\'e equations and their higher order
analogues 
are obtained by certain reductions 
of infinite-dimensional integrable systems.
This clarifies the
origin of various aspects of 
Painlev\'e equations, 
such as
the affine Weyl group symmetry, 
bilinear relations for $\tau$-functions, 
Lax formalism,
the solutions described by special polynomials.

In \cite{NYA} Noumi and Yamada proposed a
systematic description 
of nonlinear differential equations  
comprising the second, fourth and fifth Painlev\'e
equations
which 
possess $A_n^{(1)}$ affine Weyl group symmetry.
These systems 
are
obtained by similarity reductions from 
the $A_n^{(1)}$ Drinfeld-Sokolov
hierarchy
which is an
infinite-dimensional integrable system
characterized by affine Lie algebras
and their
Heisenberg subalgebras \cite{DS}, \cite{gds1}.
Also, 
Fuji and Suzuki 
derived dynamical systems,
including 
the sixth Painlev\'e equation,
by similarity reductions from the $D_{2n+2}^{(1)}$ Drinfeld-Sokolov
hierarchy \cite{FS}.
Therefore,
the affine Weyl group symmetries in the 
Painlev\'e II, IV, V, VI equations can all be derived
from integrable systems associated with affine Lie algebra.

In this paper we provide 
a natural and intrinsic 
description of the symmetry of the third Painlev\'e equation ($\mathrm{P_{III}}$)
\begin{equation}
\label{PIII}
 \frac{d^2f}{dx^2} = \frac{1}{f}\left(\frac{df}{dx}\right)^2
 -\frac{1}{x}\frac{df}{dx}
 + \frac{1}{x}(c_1f^2 + c_2) + c_3f^3 + \frac{c_4}{f}
\end{equation}
in terms of a hierarchy of soliton equations.
$\mathrm{P_{III}}$ \eqref{PIII}
has an equivalent equation
Painlev\'e III$'$
($\mathrm{P_{III'}}$) 
\cite{OkPIII}:
\begin{equation}
\label{PIII'}
 \frac{d^2y}{ds^2} = \frac{1}{y}\left(\frac{dy}{ds}\right)^2
 -\frac{1}{s}\frac{dy}{ds}
 + \frac{y^2}{4s^2}(c_3y + c_1) + \frac{c_2}{4s} + \frac{c_4}{4y},
\end{equation}
which is obtained 
by the change of the variables $s=x^2$, $y = xf$.
In what follows
we assume that $c_3c_4 \neq 0$ ($D_6^{(1)}$-type),
in which case the number of parameters 
contained in \eqref{PIII'} is two 
by means of a suitable change of scales for $y$ and $s$.
It is known that 
the transformation group of solutions of \eqref{PIII'} is
isomorphic to the affine Weyl group of type
$A_1^{(1)} \times A_1^{(1)}$ (or $B_2^{(1)}$) \cite{OkPIII}.

There are several works concerning the third Pailev\'e equation
based on the theory of integrable systems
\cite{JM3}, \cite{Joshi3}, \cite{Mas}, 
\cite{Tsuda}, \cite{WH}.
However,
there have not been any satisfactory
theories presented so far 
which
could explain 
the relationship between
Okamoto's
theory,
especially
the symmetry of the 
affine Weyl group 
based on a Hamiltonian equation and the $\tau$-function,
and the soliton equations realized
as representations of affine Lie algebras.
So
we develop 
the theory of the
``modified'' Pohlmeyer-Lund-Regge (mPLR) hierarchy 
that includes the
derivative nonlinear Schr\"odinger hierarchy
studied by Kakei and the
author \cite{KK1}, \cite{KK2},
in the same way as 
the sine-Gordon hierarchy includes the mKdV hierarchy.
The similarity reduction of the mPLR hierarchy 
gives the third Painlev\'e equation and its symmetry.

The ``modified'' Pohlmeyer-Lund-Regge (mPLR) equation is,
by definition,
the
following system of equations: 
\begin{equation}
\label{PLR-phiqr}
\left\{
\begin{aligned}
 \frac{\partial^2 q}{\partial t_1\partial \bar{t}_1}
 &= 4q + 4qr \frac{\partial q}{\partial \bar{t}_1},
 \\
 \frac{\partial^2  r}{\partial t_1\partial \bar{t}_1}
 &= 4r -4qr  \frac{\partial r}{\partial \bar{t}_1},
\end{aligned}
\right.
\end{equation}
where $q=q(t_1, \bar{t}_1)$,
$r = r(t_1, \bar{t}_1)$.
The zero-curvature representation of \eqref{PLR-phiqr} is
\begin{align}
\label{zerocurv-mPLR}
\frac{\partial B_1}{\partial \bar{t}_1} 
-\frac{\partial \bar{B}_1}{\partial t_1} 
= \bar{B}_1B_1 - B_1\bar{B}_1,
\end{align}
where
\begin{align}
\label{B1}
 B_1 &= \begin{bmatrix}  2qr & -2q \\ 0 & -2qr \end{bmatrix}
+ \begin{bmatrix}  1 & 0 \\ 2r & -1 \end{bmatrix}\zeta,
 \\
\label{B1bar}
 \bar{B}_1 &
= \begin{bmatrix}  1 & -2\bar{q}e^{2\phi} \\ 0 & -1 \end{bmatrix}\zeta^{-1}
 + \begin{bmatrix} 0 & 0 \\ 2\bar{r}e^{-2\phi} & 0 \end{bmatrix}.
\end{align}
Here $\zeta \in \IC$ is a parameter 
and the variables $\bar{q} = \bar{q}(t_1, \bar{t}_1)$,
$\bar{r} = \bar{r}(t_1, \bar{t}_1)$ and $\phi 
= \phi(t_1, \bar{t}_1)$ satisfy the following equations:
\begin{align}
\label{mPLR}
&\left\{
\begin{aligned}
 \frac{\partial q}{\partial \bar{t}_1}
 &= -2\bar{q}e^{2\phi},
\\
 \frac{\partial r}{\partial \bar{t}_1}
 &= 2\bar{r} e^{-2\phi},
\end{aligned}
\right.
\qquad
\left\{
\begin{aligned}
 \frac{\partial \phi}{\partial t_1}
 &= 2qr,
\\
 \frac{\partial \phi}{\partial \bar{t}_1}
 &= -2\bar{q}\bar{r},
\end{aligned}
\right.
\qquad
\left\{
\begin{aligned}
 \frac{\partial \bar{q}}{\partial t_1}
 &= -2qe^{-2\phi},
\\
 \frac{\partial \bar{r}}{\partial t_1}
 &= 2re^{2\phi}.
\end{aligned}
\right.
\end{align}
Originally,
the integrable Pohlmeyer-Lund-Regge system
was derived
in a study of the dynamics of relativistic vortices 
by Lund and Regge \cite{LR},
and 
independently 
in an investigation of the nonlinear
sigma model in field theory by Pohlmeyer \cite{Po}.
In \cite{JM3},
Jimbo and Miwa showed that the 
third Painlev\'e
equation is
obtained through a similarity reduction from the Pohlmeyer-Lund-Regge equation.
We will show later that the equations \eqref{mPLR} are
obtained by Miura transformation from the equations in \cite{JM3}.

This article is organized as follows. 
In Section 2, we formulate
the mPLR hierarchy 
based on the Sato theory 
by using the affine Lie algebra and group of type $A_1^{(1)}$.
Then
in Section 3,
we introduce
the B\"acklund transformations for 
the mPLR hierarchy 
and show that the group of these transformations 
provides
a realization of an extended affine Weyl group.
In Section 4,
we study a similarity reduction of the mPLR hierarchy
and give the action of the transformations defined
in Section 3 on the parameters of the similarity condition.
In Section 5 
the third Painlev\'e equation and its symmetry
are derived from the mPLR hierarchies.
In Section 6, 
we recall the similarity reduction to the fourth Painlev\'e
equation and
show that
the second Painlev\'e equation can 
also be derived from the mPLR hierarchy.

\section{Modified Pohlmeyer-Lund-Regge hierarchy}

The zero-curvature equation for the mPLR \eqref{zerocurv-mPLR} is
a compatibility condition for the linear equations
\begin{equation}
\label{linear-mPLR}
 \frac{\partial Y}{\partial t_1} = B_1Y,
 \qquad
 \frac{\partial Y}{\partial \bar{t}_1} = \bar{B}_1Y.
\end{equation}
In this section we construct 
formal solutions of these equations
based on the
theory of the affine Lie algebra $\hat{\mathfrak{sl}}_2$ and 
its group \cite{BtK1}, \cite{tau}.
This is equivalent to a modification of the 2-component 
Toda lattice hierarchy \cite{UT}.

\subsection{Gauss decomposition and $\tau$-functions}

Let $\mathfrak{g} = \mathfrak{sl}_2(\IC)$ and $\hat{\mathfrak{g}} 
= \mathfrak{sl}_2 \otimes \IC[\zeta, \zeta^{-1}] 
\oplus \IC c \oplus \IC d$ the
associated affine Lie algebra of 
type $A_1^{(1)}$.
$\hat{\mathfrak{g}}$ is generated 
by the Chevalley generators $h_i$, $e_i^\pm$ ($i=0,1$) and $d$ with
the defining relations 
\begin{align*}
&[h_i, h_j] = 0, 
\quad [h_i, e_j^\pm] = \pm a_{ij}e_j^\pm,
\quad [e_i^+, e_j^-] = \delta_{ij}h_i,
\quad
(\mathrm{ad}e_i^\pm)^{1-a_{ij}}e_j^\pm = 0,
\\
&[d, h_i] = 0,
\quad
[d, e_i^+] = \delta_{i,0}e_0^+,
\quad
[d, e_i^-] = -\delta_{i,0}e_0^-.
\end{align*}
Here $A = (a_{ij})_{i,j=0,1} 
= \begin{bmatrix} 2 & -2 \\ -2 & 2 \end{bmatrix}$ is 
the Cartan matrix of type $A_1^{(1)}$.
$\hat{\mathfrak{g}}$ has the following 
triangular decomposition:
\begin{equation}
\label{triangle}
 \mathfrak{\hat{g}} = 
 \mathfrak{\hat{n}}_- \oplus 
 \mathfrak{\hat{h}} \oplus
 \mathfrak{\hat{n}}_+,
\end{equation}
where $\mathfrak{\hat{n}}_\pm$ denotes the subalgebra of $\mathfrak{\hat{g}}$ generated 
by $e_i^\pm$ and 
$\mathfrak{\hat{h}} =  \IC h_0 \oplus \IC h_1 \oplus \IC d$ is 
the Cartan subalgebra of $\hat{\mathfrak{g}}$.
Note that $h_0+h_1 = c$.
The homogeneous Heisenberg subalgebra $\hat{\mathfrak{s}}$ of $\hat{\mathfrak{g}}$ is
given by
$\hat{\mathfrak{s}} = \oplus_{n\in \IZ} \IC H_n \oplus \IC c$ with
the commutation relations $[H_m, H_n]
= m\delta_{m+n,0}c$

Let $(\pi_j, L_j)$, $j=0,1$, be the basic representation with the highest 
weight vector $|v_j \rangle$.
The action of the Chevalley generators is given by
\begin{align}
\label{basic}
 \pi_j(e_i^+)|v_j \rangle = 0,
 \qquad
 \pi_j(h_i)|v_j \rangle = \delta_{ij} |v_j \rangle,
 \qquad
 \pi_j(e_i^-)^{\delta_{ij}+1}|v_j \rangle = 0.
\end{align}
We will denote by $(\pi_j^*, L_j^*)$ the dual representation of $(\pi_j, L_j)$, 
with the highest weight vector $\langle v_j |$ and the 
action of the Chevalley generators is given by
$$
 \pi_j^*(e_i^\pm) = \pi_j(e_i^\mp),
 \qquad
 \pi_j^*(h_i) = \pi_j(h_i),
 \qquad
 |v_i \rangle^* = \langle v_i|.
$$
It is known that
$L_j$ has the structure $L_j \simeq \IC[x_1, x_2, \dots] \otimes \IC[Q]$,
where $Q$ is the root lattice of $\mathfrak{sl}_2(\IC)$ and $\IC[Q]$ is the group 
algebra of $Q$,
and the action of $e_i^\pm$ is given in terms of vertex operators.
But we will not need this realization in this paper.

Let $(\pi, L) = (\pi_0 \oplus \pi_1, L_1 \oplus L_2)$. 
Since $(\pi, L)$ is an integrable representation, 
we can define the actions of $\exp\pi(e^\pm_i)$ $(i=0,1)$ on $L$.
Set
\begin{align}
\label{Weyl-generator}
 S_i :=& \exp\pi(e_i^-)\exp(-\pi(e_i^+))\exp\pi(e_i^-)
\qquad (i=0,1).
\end{align}
$S_0$ and $S_1$ generate the affine Weyl group of type $A_1^{(1)}$.

Now we define the $\tau$-functions of the mPLR hierarchy.
Let $\hat{G}$ be the affine Lie group generated
by $\exp s\pi(e_i^\pm)$, $i=0,1$, $s \in \IC$ and
put $g(0) \in \hat{G}$ as the initial value of the 
hierarchy.
We introduce the time-evolution of $g(0)$ with respect to time
variables $t =(t_1, t_2, \dots)$ and $\bar{t} = (\bar{t}_1, \bar{t}_2, \dots)$ by
\begin{equation}
\label{time-evolution}
g(t,\bar{t}) 
= \exp(\sum_n t_n H_n) g(0) \exp(-\sum_n \bar{t}_n H_{-n}).
\end{equation}
In the definition \eqref{time-evolution} there are infinitely many independent 
variables but in the following
we shall take almost all $t_i$ and $\bar{t}_i$'s to be zero.
We mention that
our definition \eqref{time-evolution} differs
from the previous paper \cite{KK1}.
Here we use not only the variable $t_i$ but also $\bar{t}_i$.

The $\tau$-functions of the mPLR hierarchy are defined by
\begin{equation}
\label{tau-mn}
 \tau_{m,n}^{(i)}(t,\bar{t})
:= \langle v_i| T^{-m} g(t,\bar{t}) T^n | v_i \rangle,
\qquad
(i=0,1,
\ 
m,n \in \IZ)
\end{equation}
where $T := S_0S_1$.

We will establish the relation between the $\tau$-function and the
components of $g$.
We require that the element $g = g(t,\bar{t})$ \eqref{time-evolution} can
be written as
\begin{equation}
\label{GaussDecomp}
g= g_{<0}^{-1}g_0g_{>0},
\end{equation}
where
$g_{<0} \in \exp\pi(\hat{\mathfrak{n}}_-)$,
$g_{>0} \in \exp\pi(\hat{\mathfrak{n}}_+)$ and $g_0 \in \exp\pi(\hat{\mathfrak{h}}')$,
$\hat{\mathfrak{h}}' := \IC h_0 \oplus \IC h_1$.
The factorization \eqref{GaussDecomp} is called the Gauss decomposition.
It is known that if an element $g \in \hat{G}$ belongs to the 
dense open subset of $\hat{G}$ called a big cell,
then $g$ is written as \eqref{GaussDecomp} \cite{Wil}.
Put 
\begin{align}
\label{gminus}
  g_{<0}(t,\bar{t})  &= \exp \left( q(t,\bar{t})e_0^- + r(t,\bar{t})e_1^-  
+ \cdots \right),
\\
\label{gzero}
g_0(t,\bar{t}) &= \exp (\phi_0(t,\bar{t})h_0 + \phi_1(t,\bar{t})h_1),
\\
\label{gplus}
  g_{>0}(t,\bar{t})  &= \exp\left( \bar{q}(t,\bar{t})e_1^+ + \bar{r}(t,\bar{t})e_0^+ 
+ \cdots \right).
\end{align}
Since $\exp\pi(\hat{\mathfrak{n}}_+)$ and $\exp
\pi(\hat{\mathfrak{n}}_-)$ stabilize $|v_i\rangle$ and $\langle v_i|$ ($i=0,1$) respectively,
by substituting \eqref{GaussDecomp} into \eqref{tau-mn} we have
\begin{equation}
\label{tauzero}
\begin{aligned}
\tau_{0,0}^{(i)}(t,\bar{t}) 
&= \langle v_i | g(t,\bar{t}) | v_i \rangle 
= \langle v_i | g_0(t,\bar{t}) | v_i \rangle 
\\
&= \langle v_i | \exp(\phi_0(t,\bar{t}) h_0 + \phi_1(t,\bar{t}) h_1) | v_i \rangle
= e^{\phi_i(t,\bar{t})}.
\end{aligned}
\end{equation}
Next we give the relation to 
the components of $g_{<0}$ \eqref{gminus}, 
$g_{>0}$ \eqref{gplus} and the $\tau$-functions.
By the relations \eqref{basic},
the  action of $g = g(t,\bar{t})$ on the vector $|v\rangle 
:= |v_0 \rangle + |v_1 \rangle$ is computed as follows:
\begin{align*}
 g |v\rangle
 = g_{<0}^{-1}g_0 |v \rangle 
           &= \tau_{00}^{(0)}(1 - qe_0^-  
                   + \cdots) |v_0 \rangle  
          + \tau_{00}^{(1)}(1 - re^-_1 
                   + \cdots) | v_1 \rangle.
\end{align*}
Similarly, by calculating $\langle v |g$,
we obtain
\begin{align*}
\begin{aligned}
 q &= -\frac{\langle v_0 | e_0^+ g |v_0 \rangle}{\tau_{00}^{(0)}},
\\
 r &= -\frac{\langle v_1 | e_1^+ g |v_1 \rangle}{\tau_{00}^{(1)}},
\end{aligned}
\qquad
\begin{aligned}
 \bar{q} &= \frac{\langle v_1 | g e_1^- |v_1 \rangle}{\tau_{00}^{(1)}},
\\
 \bar{r} &= \frac{\langle v_0 | g e_0^- |v_0 \rangle}{\tau_{00}^{(0)}}.
\end{aligned}
\end{align*}
On the other hand, we have the action of $T^m$ on $\langle v|$ and $|v \rangle$,
for example,
$T|v_0 \rangle
=  S_0S_1| v_0 \rangle
=  S_0| v_0 \rangle
=  e_0^- |v_0 \rangle$.
By comparing these results, we have
\begin{align}
\label{qrqbarrbar-tau}
 q &= -\frac{\tau_{1,0}^{(0)}}{\tau_{0,0}^{(0)}},
\qquad
 r = \frac{\tau_{-1,0}^{(1)}}{\tau_{0,0}^{(1)}},
\qquad
 \bar{q} = -\frac{\tau_{0,-1}^{(1)}}{\tau_{0,0}^{(1)}},
\qquad
 \bar{r} = \frac{\tau_{0,1}^{(0)}}{\tau_{0,0}^{(0)}}.
\end{align}

\subsection{Sato-Wilson equations and Zero-curvature equations}

From now on,
we consider the following level-$0$ realization of $\hat{\mathfrak{g}}$:
\begin{equation}
\label{realization}
\begin{aligned}
&e_0^+ \mapsto \begin{bmatrix}0 & 1 \\ 0 & 0 \end{bmatrix},
\quad
e_1^+ \mapsto \begin{bmatrix} 0 & 0  \\ \zeta  & 0 \end{bmatrix}, 
\\
&e_0^- \mapsto \begin{bmatrix} 0 & \zeta^{-1} \\ 0 & 0 \end{bmatrix}, \quad
e_1^- \mapsto \begin{bmatrix} 0 & 0 \\ 1 & 0 \end{bmatrix}, \quad
h_1 \mapsto \begin{bmatrix} 1 & 0 \\ 0 & -1 \end{bmatrix},
\end{aligned}
\end{equation}
$c \mapsto 0$ and $d \mapsto \zeta\frac{d}{d\zeta}$.
Then the generators of the homogeneous Heisenberg subalgebra are
realized by $H_n \mapsto 
\begin{bmatrix} \zeta^n & 0 \\ 0 & -\zeta^n \end{bmatrix}$.
The elements of the affine Lie group $\hat{G}$
\eqref{gminus}, \eqref{gzero} and \eqref{gplus} can be written 
by means of the following $2\times 2$ matrices:
\begin{align}
\label{defW}
 g_{<0}(t,\bar{t}) =& I + W_1(t,\bar{t})\Lambda^{-1} + W_2(t,\bar{t}) \Lambda^{-2}  + \cdots,
 \\
\label{defg0}
  g_0(t,\bar{t}) =& \exp \Phi(t,\bar{t}),
 \\
\label{defWbar}
 g_{>0}(t,\bar{t}) =& I + \bar{W}_1(t,\bar{t})\Lambda 
 + \bar{W}_2(t,\bar{t})\Lambda^2 + \cdots,
\end{align}
where $\Lambda = \begin{bmatrix} 0 & 1 \\ \zeta & 0 \end{bmatrix}$,
and $\Phi(t,\bar{t}), W_i(t,\bar{t}), 
\bar{W}_i(t,\bar{t})$ ($i=1,2,\dots$) are $2\times 2$ diagonal matrices.
Especially, we see that
\begin{equation}
\label{W1}
W_1(t,\bar{t}) = \begin{bmatrix} q & 0 \\ 0 &  r \end{bmatrix},
\quad
\Phi(t,\bar{t}) = \begin{bmatrix} \phi & 0 \\ 0 & -\phi \end{bmatrix}
\quad \mbox{and} \quad
\bar{W}_1(t,\bar{t}) = \begin{bmatrix} \bar{q} & 0 \\ 0 &  \bar{r} \end{bmatrix},
\end{equation}
where
\begin{equation}
 e^{\phi} = e^{\phi_1-\phi_0}
= \frac{\tau^{(1)}_{0,0}}{\tau^{(0)}_{0,0}}.
\end{equation}

By the defining relation \eqref{time-evolution},
$g=g(t,\bar{t})$ satisfies the equation
\begin{equation}
\label{derivative-g}
\frac{\partial g}{\partial t_n} = H_ng,
\qquad 
\frac{\partial g}{\partial \bar{t}_n} = -g H_{-n}.
\end{equation}
Using the Gauss decomposition \eqref{GaussDecomp},
$g_{<0} = g_{<0}(t,\bar{t})$ and $g_{\ge 0} := g_0(t,\bar{t})g_{>0}(t,\bar{t})$ satisfy
the following linear equations,
the so-called Sato-Wilson equations:
\begin{align}
\label{homogSW}
&\left\{
\begin{aligned} 
\frac{\partial g_{<0}}{\partial t_n} 
=& -\left(g_{<0} H_n g_{<0}^{-1}\right)_{<0}g_{<0},
\\
\frac{\partial g_{\ge 0} }{\partial t_n} 
=& \left(g_{<0} H_n g_{<0}^{-1}\right)_{\ge 0}g_{\ge 0},
\end{aligned}
\right.
\qquad
\left\{
\begin{aligned}
\frac{\partial g_{<0}}{\partial \bar{t}_n} 
=& \left(g_{\ge 0} H_{-n} g_{\ge 0}^{-1}\right)_{<0}g_{<0},
\\
\frac{\partial g_{\ge 0} }{\partial \bar{t}_n} 
=& -\left(g_{\ge 0} H_{-n} g_{\ge 0}^{-1}\right)_{\ge 0}g_{\ge 0},
\end{aligned}
\right.
\end{align}
where 
$(\;)_{<0}$ and $(\;)_{\ge 0}$ denote the negative and nonnegative power
parts of $\Lambda$ (not of $\zeta$),
i.e.,
for a matrix $A = \sum_{i \in \IZ} A_i \Lambda^i$,
we put $A_{<0} =  \sum_{i < 0} A_i \Lambda^i$ and
$A_{\ge 0} = \sum_{i \ge 0} A_i \Lambda^i$.
The equations \eqref{homogSW} for $n=1$ give
the
modified Pohlmeyer-Lund-Regge equations \eqref{mPLR}.
Also, from the formulas \eqref{derivative-g} we obtain the
relations
\begin{equation}
\label{w-tau}
\left\{
\begin{aligned}
 w_{21} &= -\frac{1}{2}\frac{\partial}{\partial t_1}\log\tau_{0,0}^{(0)},
\\
 w_{22} &= \frac{1}{2}\frac{\partial}{\partial t_1}\log\tau_{0,0}^{(1)},
\end{aligned}
\right.
\qquad
\quad
\left\{
\begin{aligned}
 \bar{w}_{21} &= -\frac{1}{2}\frac{\partial}{\partial \bar{t}_1}\log\tau_{0,0}^{(1)}, 
\\
 \bar{w}_{22} &= \frac{1}{2}\frac{\partial}{\partial \bar{t}_1}\log\tau_{0,0}^{(0)}.
\end{aligned}
\right.
\end{equation}
Here $W_2 := \mathrm{diag}(w_{21}, w_{22})$ and $\bar{W}_2 
:= \mathrm{diag}(\bar{w}_{21}, \bar{w}_{22})$.
Notice that $H_n = \Lambda^{2n}$.

Next 
we will construct formal solutions to the linear equation \eqref{linear-mPLR}.
Let $\alpha, \beta \in \IC$ be constants.
We define the wave 
functions $\Psi(\zeta;\alpha,t,\bar{t})$ and $\bar{\Psi}(\zeta;\beta,t,\bar{t})$ by
\begin{equation}
\label{BAfunction}
\begin{aligned}
 &\Psi(\zeta; \alpha, t, \bar{t})
 := g_{<0} \zeta^{\alpha H_0} \Psi_0(\zeta;t),
\\
 &\bar{\Psi}(\zeta; \beta, t, \bar{t})
  := g_{\ge 0}\zeta^{\beta H_0} \bar{\Psi}_0(\zeta;\bar{t}),
\end{aligned}
\end{equation}
where 
\begin{align}
 \Psi_0(\zeta; t)
&:= \exp \left( \sum_{n} t_n H_n \right)
= \begin{bmatrix} 
    e^{t_1\zeta + t_2\zeta^2 + \cdots}
    & 0 \\ 0 &
     e^{-(t_1\zeta + t_2\zeta^2 + \cdots)} 
   \end{bmatrix},
\label{Psi0}
\\
 \bar{\Psi}_0(\zeta; \bar{t})
&:= \exp \left( \sum_{n} \bar{t}_n H_{-n} \right)
= \begin{bmatrix} 
    e^{\bar{t}_1\zeta^{-1} + \bar{t}_2\zeta^{-2} + \cdots}
    & 0 \\ 0 &
     e^{-(\bar{t}_1\zeta^{-1} + \bar{t}_2\zeta^{-2} + \cdots)} 
   \end{bmatrix}.
\label{Psi0bar}
\end{align}
Obviously, $\Psi_0$ and $\bar{\Psi}_0$ satisfy the linear equations 
$$
\frac{\partial \Psi_0}{\partial t_n} = H_n \Psi_0,
\qquad
\frac{\partial \bar{\Psi}_0}{\partial \bar{t}_n} = H_{-n} \bar{\Psi}_0.
$$
By this relation and the Sato-Wilson equations \eqref{homogSW},
the functions
$Y = \Psi, \bar{\Psi}$ satisfy 
the linear equations 
\begin{equation}
\label{linearDE}
 \frac{\partial Y}{\partial t_n} = B_nY,
 \qquad
 \frac{\partial Y}{\partial \bar{t}_n} 
= \bar{B}_n Y,
\end{equation}
where
\begin{equation}
\label{defofBn}
 B_n := \left(g_{<0} H_n g_{<0}^{-1}\right)_{\ge 0},
 \qquad
 \bar{B}_n := \left(g_{\ge 0} H_{-n} g_{\ge 0}^{-1}\right)_{<0}.
\end{equation}
The lowest case $n=1$ gives $B_1$ in \eqref{B1} and $\bar{B}_1$ in \eqref{B1bar}.

Compatibility conditions for the linear equations
\eqref{linearDE} lead to the following zero-curvature equations:
\begin{equation}
\label{zerocurv}
\begin{aligned}
&\left[ \frac{\partial}{\partial t_n}
 -B_n, \frac{\partial}{\partial t_m} - B_m
\right] = 0,
\qquad
\left[ \frac{\partial}{\partial \bar{t}_n}
 - \bar{B}_n, 
\frac{\partial}{\partial \bar{t}_m} - \bar{B}_m  
\right] = 0,
\\
&\left[ \frac{\partial}{\partial t_n}
 -B_n, \frac{\partial}{\partial \bar{t}_m} - \bar{B}_m 
\right] = 0.
\end{aligned}
\end{equation}
Notice that
by the Sato-Wilson equations \eqref{homogSW}, 
the linear operators of \eqref{zerocurv} can be represented as follows:
\begin{equation}
\label{del-B}
\begin{aligned}
\frac{\partial}{\partial t_n} - B_n 
&= g_{<0} \cdot \left(\frac{\partial}{\partial t_n} - H_n\right) \cdot g_{<0}^{-1}
= g_{\ge 0} \cdot \frac{\partial}{\partial t_n} \cdot g_{\ge 0}^{-1},
\\
\frac{\partial}{\partial t_n} - \bar{B}_n 
&= g_{\ge 0} \cdot \left(\frac{\partial}{\partial \bar{t}_n} - H_{-n} \right)
\cdot g_{\ge 0}^{-1}
= g_{< 0} \cdot \frac{\partial}{\partial \bar{t}_n} \cdot g_{< 0}^{-1}.
\end{aligned}
\end{equation}
Here the product $\cdot$ means the composition of operators.

\subsection{Lax equations and a conserved density}

It seems that the equations of hierarchy \eqref{zerocurv}, \eqref{defofBn} contain
an infinite number of dependent variables (components of $W_i$ and $\bar{W}_i$,
$i=1,2,3,\dots$).
However, we will show that the coefficients of $B_n$, $\bar{B}_n$ are differential polynomial
of $(q, r)$, $(\bar{q}, \bar{r})$ with respect to $t_1$, $\bar{t}_1$ respectively.
Set
\begin{equation}
\label{homogeneous-Lax-operator}
\begin{aligned}
 L :=& g_{<0} H_0 g_{<0}^{-1} 
= H_0 + U_1\Lambda^{-1} + U_2\Lambda^{-2}
+ \cdots,
 \\
 \bar{L} :=& g_{\ge 0} H_0 g_{\ge 0}^{-1} 
= g_0\left( H_0 + \bar{U}_1\Lambda + \bar{U}_2 \Lambda^2 + \cdots \right)g_0^{-1},
\end{aligned}
\end{equation}
where $U_i = \begin{bmatrix} u_{i1} & 0 \\ 0 & u_{i2} \end{bmatrix}$,
$\bar{U}_i = \begin{bmatrix} \bar{u}_{i1} & 0 \\ 0 & \bar{u}_{i2} \end{bmatrix}$.
By the Sato-Wilson equations \eqref{homogSW} 
we have the following relations:
\begin{equation}
\label{defofU2n}
U_{2n} = \frac{\partial \Phi}{\partial t_n},
\qquad
\bar{U}_{2n} = -\frac{\partial \Phi}{\partial \bar{t}_n},
\end{equation}
and
\begin{equation}
\label{defofU2n+1}
U_{2n+1} = -\frac{\partial W_1}{\partial t_n},
\qquad
\bar{U}_{2n+1} = -\frac{\partial \bar{W}_1}{\partial \bar{t}_n}.
\end{equation}
The matrices $L$ and $\bar{L}$ satisfy the following system of equations (Lax equations):
\begin{equation}
\label{homogLaxeq}
\left\{
\begin{aligned}
\frac{\partial L}{\partial t_n}
 &= [B_n, L],
\\
\frac{\partial \bar{L}}{\partial t_n}
 &= [B_n, \bar{L}],
\end{aligned}
\right.
\qquad
\left\{
\begin{aligned}
\frac{\partial L}{\partial \bar{t}_n}
 =& [\bar{B}_n, L],
\\
\frac{\partial \bar{L}}{\partial \bar{t}_n}
 =& [\bar{B}_n, \bar{L}].
\end{aligned}
\right.
\end{equation}
\begin{proposition}
\label{prop1}
The components of $U_n$ \eqref{homogeneous-Lax-operator}
can be described by differential
polynomials of $q$ and $r$ with respect to $t_1$,
and $\bar{U}_n$ by differential
polynomials of $\bar{q}$ and $\bar{r}$ with respect to $\bar{t}_1$.
\end{proposition}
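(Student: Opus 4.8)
The plan is to extract two complementary families of relations among the diagonal coefficients $U_k$ of $L = g_{<0}H_0 g_{<0}^{-1} = H_0 + \sum_{k\ge 1}U_k\Lambda^{-k}$, and then to induct on $k$ with the lowest coefficient $U_1$ as the base case. First I would use that $L$ is a $g_{<0}$-conjugate of $H_0$ and that in the realization \eqref{realization} one has $H_0^2 = I$; hence $L^2 = g_{<0}H_0^2 g_{<0}^{-1} = I$ identically. Collecting the coefficient of $\Lambda^{-k}$ in $L^2 = I$ gives, for each $k\ge 1$, a relation of the form
\[
(1+(-1)^k)\,H_0 U_k + \sum_{j=1}^{k-1}U_j\,U_{k-j}^{\sigma^{\,j}} = 0,
\]
where for a diagonal matrix $D$ we write $D^\sigma$ for the matrix obtained by interchanging its two diagonal entries (this arises from $\Lambda D = D^\sigma\Lambda$). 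For \emph{even} $k$ the leading term $2H_0 U_k$ survives, so this solves for $U_k$ as a quadratic polynomial in $U_1,\dots,U_{k-1}$; for odd $k$ it is merely a constraint.

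Second I would use the Lax equation $\partial L/\partial t_1 = [B_1, L]$. Since $H_1 = \zeta H_0$ and $\zeta = \Lambda^2$ is central, $g_{<0}H_1 g_{<0}^{-1} = \zeta L$, whence $B_1 = (\zeta L)_{\ge 0} = H_0\Lambda^2 + U_1\Lambda + U_2$ depends only on $U_1, U_2$. Reading off the coefficient of $\Lambda^{-m}$ in $[B_1, L]$ produces an identity expressing $\partial U_m/\partial t_1$ in terms of $U_m, U_{m+1}, U_{m+2}$ and $U_1, U_2$ alone. When $m$ is \emph{odd} the matrix multiplying $U_{m+2}$ is $2H_0$, invertible since $H_0^2 = I$; solving, $U_{m+2}$ becomes a differential polynomial (one $\partial/\partial t_1$) in $U_m, U_{m+1}, U_1, U_2$, all of index $< m+2$. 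When $m$ is even the $U_{m+2}$-term cancels, so the Lax equation by itself never yields the even-index coefficients — which is precisely why the first family is indispensable.

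Now the induction. For the base case, the $\Lambda^{-1}$-coefficient of $L = g_{<0}H_0 g_{<0}^{-1}$ together with $W_1 = \mathrm{diag}(q,r)$ gives $U_1 = \mathrm{diag}(-2q, 2r)$, a polynomial in $q, r$. Assume $U_1,\dots,U_{k-1}$ are differential polynomials in $q, r$ with respect to $t_1$. If $k$ is even, the first family expresses $U_k$ polynomially in $U_1,\dots,U_{k-1}$; if $k$ is odd (hence $k\ge 3$), the second family with $m = k-2$ expresses $U_k$ as a $t_1$-differential polynomial in $U_{k-2}, U_{k-1}, U_1, U_2$. Either way $U_k$ is a differential polynomial in $q, r$ with respect to $t_1$, which closes the induction (and, since $B_n = (\zeta^n L)_{\ge 0}$ involves only $U_1,\dots,U_{2n}$, also yields the differential-polynomial dependence of $B_n$ claimed above). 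The statement for $\bar U_n$ is obtained by the symmetric argument: $\bar L = g_{\ge 0}H_0 g_{\ge 0}^{-1}$ is conjugate to $H_0$, so $(g_{>0}H_0 g_{>0}^{-1})^2 = I$, and the Lax equation $\partial\bar L/\partial\bar t_1 = [\bar B_1, \bar L]$ replaces the $t_1$-equation once the diagonal factor $g_0$ is conjugated away and the powers of $\Lambda$ are reflected in sign.

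The main obstacle, I expect, is recognizing that the Lax equation is genuinely degenerate on the even-index coefficients, so that one must bring in the purely algebraic identity $L^2 = I$ to handle those cases, and then organizing both recursions so that every $U_k$ is expressed strictly in terms of lower-index data (this rests on the invertibility of the leading matrix $H_0$). Checking that the two families are mutually consistent — for instance that the Lax equation at even $m$ holds automatically — is not needed for the conclusion, but follows from the existence of the Gauss decomposition \eqref{GaussDecomp}.
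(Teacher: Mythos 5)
Your proposal is correct and follows essentially the same route as the paper: it combines the algebraic identity $L^2=g_{<0}H_0^2g_{<0}^{-1}=I$ (which determines the even-index coefficients) with the $t_1$-Lax equation for $B_1=(\zeta L)_{\ge 0}=H_0\Lambda^2+U_1\Lambda+U_2$ (which determines the odd-index ones), and then recurses, treating $\bar L$ symmetrically. Your parity bookkeeping and the invertibility of the $2H_0$ coefficient simply make explicit what the paper's relations \eqref{L^2=I} and \eqref{u2k+1} encode.
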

\begin{proof}
By the definition of $L$ \eqref{homogeneous-Lax-operator}, 
$L^2 = g_{<0} H_0^2 g_{<0}^{-1} = I$ and hence
\begin{equation}
\label{L^2=I}
 H_0U_j  + U_j\Lambda^{-j}H_0\Lambda^j
 + \sum_{k=1}^{j-1} U_k\Lambda^{-k} U_{j-k}\Lambda^k
= 0 
\qquad
 (j=1,2,\dots).
\end{equation}
On the other hand,
the derivation of $L$ with respect to $t_1$ in \eqref{homogLaxeq} is expressed by
\begin{align*}
\frac{\partial L}{\partial t_1}
=& [L, (\zeta L)_{<0}]
= [L_{\ge -2}, (\zeta L)_{<0}]
\\
=& [H_0 + U_1\Lambda^{-1} + U_{2}\Lambda^{-2},
U_3\Lambda^{-3} + U_4\Lambda^{-4} + \cdots].
\end{align*}
By comparing the coefficients
of $\Lambda^{2k}$ and $\Lambda^{2k+1}$ we have
\begin{equation}
\label{u2k+1}
\begin{aligned}
\frac{\partial U_{2k}}{\partial t_1} 
&= [U_1\Lambda, U_{2k+1}\Lambda^{-1}],
\\
\frac{\partial U_{2k+1}}{\partial t_1}\Lambda 
&=
[H_0, U_{2k+3}\Lambda] +
[U_1\Lambda, U_{2k+2}] +
[U_2, U_{2k+1}\Lambda]
\end{aligned}
\end{equation}
($k=1,2,3,\dots$).
Together with the relation \eqref{L^2=I},
the $U_j$ ($j \ge 2$) are recursively determined as
differential polynomials of $q$ and $r$ with respect to $t_1$.
$\bar{L}$ can be treated in the same way.
\end{proof}
We give some examples of $u_{n1}$ and $u_{n2}$ for small $n$.
\begin{align}
\label{u3qrrep}
&\left\{
\begin{aligned}
 u_{31} &= -q', 
\\
 u_{32} &= -r',
\end{aligned}
\right.
\qquad
\left\{
\begin{aligned}
 u_{41} & = q'r-qr'-2q^2r^2,
\\
 u_{42} &= -u_{41},
\end{aligned}
\right.
\qquad
\left\{
\begin{aligned}
 u_{51} &= -\frac{q''}{2} + 2q^2r'+4q^3r^2,
\\
 u_{52} &= \frac{r''}{2} + 2q'r^2 - 4q^2r^3,
\end{aligned}
\right.
\\
\label{u6qrrep}
&\left\{
\begin{aligned}
 u_{61} 
&= \frac{1}{2}(q''r+qr''-q'r')-4q^3r^3,
\\
 u_{62} &= -u_{61},
\end{aligned}
\right.
\qquad
\left\{
\begin{aligned}
 u_{71} 
&= -\frac{q'''}{4}+3qq'r' + 6q^2r^2q',
\\
 u_{72} 
&=-\frac{r'''}{4} + 3rq'r' + 6q^2r^2r',
\end{aligned}
\right.
\end{align}
where $f'$ means $\frac{\partial f}{\partial t_1}$.
By Proposition \ref{prop1},
the equations \eqref{defofU2n+1} can be regarded as
nonlinear differential equations for the variables $(q,r)$ or
$(\bar{q}, \bar{r})$.
For example,
we have
\begin{equation}
\label{derivativeNLS}
\left\{
\begin{aligned}
\frac{\partial q}{\partial t_2} &= \frac{q''}{2} -2q^2r' - 4q^3r^2,
\\
\frac{\partial r}{\partial t_2} &= -\frac{r''}{2} - 2r^2q' + 4q^2r^3,
\end{aligned}
\right.
\qquad
\left\{
\begin{aligned}
\frac{\partial q}{\partial t_3} &= \frac{q'''}{4} -3qq'r' - 6q^2r^2q',
\\
\frac{\partial r}{\partial t_3} &= \frac{r'''}{4} -3q'rr' - 6q^2r^2r'.
\end{aligned}
\right.
\end{equation}
The first couple of equations is the derivative nonlinear Schr\"odinger equation 
we have studied in \cite{KK1}, \cite{KK2} and the 
second one is a modification of the coupled modified KdV equation.

Other important functions 
are $W_2$ in \eqref{defW} and $\bar{W}_2$ in \eqref{defWbar}, which
are written as in \eqref{w-tau} by 
means of $\tau$-functions.
They are not differential polynomials of $q, r$ or $\bar{q}, \bar{r}$.
By the Sato-Wilson equations \eqref{homogSW}
they satisfy the relations
\begin{equation}
\label{W2tnbibun}
\left\{
\begin{aligned}
 &\frac{\partial W_2}{\partial t_n} = -U_{2n+2}
-U_{2n+1}(\Lambda W_1\Lambda^{-1}),
\\
 &\frac{\partial W_2}{\partial \bar{t}_n} = \bar{U}_{2n-2}
+ e^{2\Phi}\bar{U}_{2n-1}(\Lambda W_1 \Lambda^{-1})
\quad
(\bar{U}_0 := H_0),
\end{aligned}
\right.
\end{equation}
and
\begin{equation}
\label{W2bartnbibun}
\left\{
\begin{aligned}
 &\frac{\partial \bar{W}_2}{\partial t_n} = 
U_{2n-2} + e^{-2\Phi}
U_{2n-1}(\Lambda \bar{W}_1 \Lambda^{-1})
\quad (U_0 := H_0),
\\
 &\frac{\partial \bar{W}_2}{\partial \bar{t}_n} = 
-\bar{U}_{2n+2} -\bar{U}_{2n+1}(\Lambda \bar{W}_1 \Lambda^{-1}).
\end{aligned}
\right.
\end{equation}
So, the trivial identity $\partial^2 W_2/\partial t_n \partial t_m
= \partial^2 W_2/\partial t_m \partial t_n$ gives us
the $t_n$-conserved density $F_n = \partial W_2/\partial t_n$,
i.e.,
for all $m$ there exists a $t_n$-differential
polynomial $G_m$ such that
$\partial F_n/\partial t_m = \partial G_m/\partial t_n$.
For instance,
\begin{align}
\label{w21'}
&\left\{ \begin{aligned}
 \frac{\partial w_{21}}{\partial t_1}
&= 
-2q\hat{r},
\\
 \frac{\partial w_{22}}{\partial t_1} 
&= 
2\hat{q}r,
\end{aligned} \right.
\qquad
\left\{
\begin{aligned}
\frac{\partial w_{21}}{\partial t_2} 
&=  q\hat{r}' - q'\hat{r},
\\
\frac{\partial w_{22}}{\partial t_2} 
&= \hat{q}'r - \hat{q}r'.
\end{aligned}
\right.
\end{align}
Here we put
\begin{align}
\label{rhat}
\hat{q}
:=  \frac{1}{2}\frac{\partial q}{\partial t_1}-q^2r,
\qquad
\hat{r} 
:= -\frac{1}{2}\frac{\partial r}{\partial t_1} - qr^2.
\end{align}

\subsection{Miura-type transformation to the AKNS hierarchy}

Here
we describe the Miura-type transformations of the mPLR
hierarchy to the PLR hierarchy.
Miura-type transformations are defined by
\begin{align}
\label{Miu-}
g_{<0} &\mapsto \tilde{g}_{<0} := 
\begin{bmatrix} 1 & 0 \\ -r & 1 \end{bmatrix}g_{<0} 
= \cdots
+ \begin{bmatrix} 
                w_{21} & q \\ 
                \hat{r} & w_{22}-qr
        \end{bmatrix}\zeta^{-1} 
+ \begin{bmatrix}
                 1  &  0   \\ 
                 0 &  1
        \end{bmatrix},
\\
 g_0 &\mapsto
 \tilde{g}_0 := 
\begin{bmatrix} 1 & 0 \\ r & 1 \end{bmatrix}g_0 
\begin{bmatrix} 1 & \bar{q} \\ 0 & 1 \end{bmatrix}
=: \begin{bmatrix} a & b \\ c & d \end{bmatrix},
\\
\label{Miu+}
 g_{>0} &\mapsto \tilde{g}_{>0} 
:= \begin{bmatrix} 1 & -\bar{q} \\ 0 & 1 \end{bmatrix} g_{>0}
= \begin{bmatrix} 
          1 &  0  \\ 
          0 &  1
        \end{bmatrix} 
+ \begin{bmatrix}
              \bar{w}_{21}-\bar{q}\bar{r} & \hat{\bar{q}}  \\ 
              \bar{r} & \bar{w}_{22}
        \end{bmatrix}\zeta
+ \cdots .
\end{align}
where $\hat{r}$ is defined by \eqref{rhat} and 
we also put
\begin{align}
\label{qbarhat}
\hat{\bar{q}} 
:= \frac{1}{2}\frac{\partial \bar{q}}{\partial \bar{t}_1} - \bar{q}^2\bar{r},
\qquad
\hat{\bar{r}} 
:= -\frac{1}{2}\frac{\partial \bar{r}}{\partial \bar{t}_1}-\bar{q}\bar{r}^2.
\end{align}
We remark that the expressions of $\hat{r}$ in \eqref{Miu-} and $\hat{\bar{q}}$ 
in \eqref{Miu+} are proved 
by using the relations \eqref{homogeneous-Lax-operator} and \eqref{u3qrrep}.
The variables $(a,b,c,d, q, \hat{r}, \hat{\bar{q}}, \bar{r})$ obtained as above, 
satisfy the
following relations:
\begin{align}
 &\frac{\partial}{\partial t_1} 
 \begin{bmatrix} a & b \\ c & d \end{bmatrix}
 =\begin{bmatrix} 0 & -2q \\ 2\hat{r} & 0 \end{bmatrix}
 \begin{bmatrix} a & b \\ c & d \end{bmatrix},
\qquad
 \frac{\partial}{\partial \bar{t}_1} 
 \begin{bmatrix} a & b \\ c & d \end{bmatrix}
 = \begin{bmatrix} a & b \\ c & d \end{bmatrix}
\begin{bmatrix} 0 & 2\hat{\bar{q}} \\ -2\bar{r} & 0 \end{bmatrix}
\\
 &\frac{\partial q}{\partial \bar{t}_1} = -2ab,
 \quad
 \frac{\partial \hat{r}}{\partial \bar{t}_1} = 2cd,
 \quad
 \frac{\partial \hat{\bar{q}}}{\partial t_1} = 2bd,
 \quad
 \frac{\partial \bar{r}}{\partial t_1} = -2ac.
\label{PLR-qrqbarrbar}
\end{align}
This system of equations is the Pohlmeyer-Lund-Regge equation
discussed in \cite{JM3}.
Furthermore, we have
\begin{equation}
\label{NLS}
\left\{
\begin{aligned}
   \frac{\partial q}{\partial t_2} 
&= \frac{1}{2}\frac{\partial^2 q}{\partial t_1^2} + 4q^2 \hat{r},
\\
   \frac{\partial \hat{r}}{\partial t_2}
&= -\frac{1}{2}\frac{\partial^2 \hat{r}}{\partial t_1^2} - 4q \hat{r}^2,
\end{aligned}
\right.
\qquad
\quad
\left\{
\begin{aligned}
   \frac{\partial q}{\partial t_3} 
&= \frac{1}{4}\frac{\partial^3 q}{\partial t_1^3} 
+ 6q\hat{r}\frac{\partial q}{\partial t_1},
\\
   \frac{\partial \hat{r}}{\partial t_3}
&= \frac{1}{4}\frac{\partial^3 \hat{r}}{\partial t_1^3} + 6q \hat{r}\hat{r}'.
\end{aligned}
\right.
\end{equation}
They are the nonlinear Schr\"odinger equation
and the coupled modified KdV equation.

The Miura-type transformation 
of the wave functions \eqref{BAfunction},
i.e.,
$Y = \tilde{g}_{<0} \zeta^{\alpha H_0}\Psi_0$ or
$\tilde{g}_{\ge 0}\zeta^{\beta H_0}\bar{\Psi}_0$ satisfy the
linear equations
\begin{equation}
\label{linearDE-m}
 \frac{\partial Y}{\partial t_n} = \tilde{B}_nY,
 \qquad
 \frac{\partial Y}{\partial \bar{t}_n} 
= \tilde{\bar{B}}_n Y,
\end{equation}
where $\tilde{B}_n = (\tilde{g}_{<0}H_n\tilde{g}_{<0})^{-1}$,
$\tilde{\bar{B}}_n = (\tilde{g}_0\tilde{g}_{> 0}
H_{-n}\tilde{g}_{> 0}^{-1}\tilde{g}_0^{-1})_{<0}$.
For example,
\begin{align*}
 \tilde{B}_1 &= \begin{bmatrix}  0 & -2q \\ 2\hat{r} & 0 \end{bmatrix}
+ \begin{bmatrix}  1 & 0 \\ 0 & -1 \end{bmatrix}\zeta,
 \\
 \tilde{\bar{B}}_1 
&= \begin{bmatrix} a & b \\ c & d \end{bmatrix}
 \begin{bmatrix}  \zeta^{-1} & 0 \\ 0 & -\zeta^{-1} \end{bmatrix}
 \begin{bmatrix} a & b \\ c & d \end{bmatrix}^{-1}.
\end{align*}
Compatibility conditions for these equations give
the AKNS hierarchy.

\section{Action of the extended affine Weyl groups of type $A_1^{(1)}$}

\subsection{Basic transformations}

We now discuss symmetries in the mPLR hierarchy. 
In this section we construct six ``basic'' transformations 
on the variables $g_{<0}$, $g_0$ and $g_{>0}$.
First, we define
\begin{equation}
\label{sigmaaction}
 \sigma :  (t, \bar{t}, W_i(t,\bar{t}), \bar{W}_i(t,\bar{t}), \Phi(t,\bar{t})) 
\mapsto (\bar{t}, t, \bar{W}_i(\bar{t}, t), W_i(\bar{t}, t), -\Phi(\bar{t}, t)),
\end{equation}
($i = 1,2,\dots$).
The correspondence of  the coefficients is the following:
\begin{equation}
\label{defofsigma}
\sigma(\phi)(t,\bar{t}) = -\phi(\bar{t},t),
\qquad
\left\{
\begin{aligned}
\sigma(q)(t,\bar{t}) &= \bar{q}(\bar{t},t),
\\
\sigma(r)(t,\bar{t}) &= \bar{r}(\bar{t},t),
\end{aligned}
\right.
 \qquad 
\left\{
\begin{aligned}
 \sigma(\bar{q})(t,\bar{t}) &= q(\bar{t},t),
 \\
 \sigma(\bar{r})(t,\bar{t}) &= r(\bar{t},t).
\end{aligned}
\right.
\end{equation}
Second, 
we consider the birational transformations for the variables $\phi,
q, r, \bar{q}$ and $\bar{r}$.
In ref.~\cite{KK1}, 
we have constructed two types of extended affine Weyl group symmetries 
(the ``left action'' and the ``right action'')
for the derivative NLS hierarchy.
Here we extend these to the mPLR hierarchy.
We denote the matrices $S_0$, $S_1$ by
\begin{align}
\label{Weyls0s1}
 S_0   := \begin{bmatrix} 
                0  & \zeta^{-1}\\ 
               -\zeta &  0
        \end{bmatrix}
\qquad
 S_1   := \begin{bmatrix}
                0  & -1   \\ 
                1 &  0 
        \end{bmatrix},
\end{align}
They are the level-zero realization of  the elements \eqref{Weyl-generator}.
Note that $S_0$ and $S_1$ \eqref{Weyls0s1} satisfy the relation $S_0^4 = S_1^4  = I$.
Define the ``left action'' of operators $S_0$, $S_1$  for $g_{>0}, 
g_0$ and $g_{<0}$ as follows:
\begin{align}
\label{leftaction}
\begin{aligned}
 &s_i^{\mathrm{L}}(g_{<0})(t,\bar{t}) 
= G_i^{\mathrm{L}}(-t,\bar{t})g_{<0}(-t, \bar{t})S_i,
\\
 &s_i^{\mathrm{L}}(g_0)(t,\bar{t}) 
= G_i^{\mathrm{L}}(-t,\bar{t})g_0(-t,\bar{t})
\bar{G}_i^{\mathrm{L}}(-t,\bar{t})^{-1},
\\
 &s_i^{\mathrm{L}}(g_{>0})(t,\bar{t}) 
= \bar{G}_i^{\mathrm{L}}(-t,\bar{t})g_{>0}(-t, \bar{t}),
\end{aligned}
\end{align}
($i=0,1$)
where
\begin{equation}
\label{Finfty}
\begin{aligned}
 &G_0^{\mathrm{L}}  
     := \begin{bmatrix}
               -1/q &  0  \\ 
                 \zeta  & -q 
        \end{bmatrix},
\quad
 G_1^{\mathrm{L}}
  :=         \begin{bmatrix}
               -r &  1  \\ 
                 0 & -1/r
        \end{bmatrix},
\\
 &\bar{G}_0^{\mathrm{L}} := \begin{bmatrix}
           1 & 0 \\ -\zeta e^{2\phi}/q & 1
       \end{bmatrix},
\quad
  \bar{G}_1^{\mathrm{L}} :=   \begin{bmatrix}
               1 & 1/re^{2\phi}  \\ 
                 0 & 1 
        \end{bmatrix}.
\end{aligned}
\end{equation} 
We also define the ``right action''  $s_0^{\mathrm{R}}, s_1^{\mathrm{R}}$ by 
\begin{align}
\label{rightaction}
\begin{aligned}
 &s_i^{\mathrm{R}}(g_{<0})(t,\bar{t}) 
=  G_i^{\mathrm{R}}(t,-\bar{t})g_{<0}(t, -\bar{t}),
\\
 &s_i^{\mathrm{R}}(g_0)(t,\bar{t}) 
= G_i^{\mathrm{R}}(t,-\bar{t})
g_0(t,-\bar{t}) \bar{G}_i^{\mathrm{R}}(t,-\bar{t})^{-1},
\\ 
 &s_i^{\mathrm{R}}(g_{>0})(t,\bar{t}) 
= \bar{G}_i^{\mathrm{R}}(t,-\bar{t})g_{>0}(t, -\bar{t})S_i,
\end{aligned}
\end{align}
($i=0,1$) where
\begin{equation}
\label{GzeroGone}
\begin{aligned}
 &G_0^{\mathrm{R}}   
=  \begin{bmatrix}
                  1  &  -\zeta^{-1}e^{2\phi}/\bar{r}  \\
                  0  &      1   
      \end{bmatrix},
\quad
 G_1^{\mathrm{R}} = \begin{bmatrix}
            1     &  0  \\ 
        -1/\bar{q}e^{2\phi} &  1
  \end{bmatrix},
\\
 &\bar{G}_0^{\mathrm{R}}  = \begin{bmatrix}
               \bar{r} &  -\zeta^{-1}  \\ 
                 0  & 1/\bar{r} 
        \end{bmatrix},
\quad
 \bar{G}_1^{\mathrm{R}}  =  \begin{bmatrix}
               1/\bar{q} &  0  \\ 
                 -1 &  \bar{q} 
        \end{bmatrix}.
\end{aligned}
\end{equation}
For example, we have 
\begin{alignat}{2}
\label{szeroLe}
s_0^\mathrm{L}
(e^{\phi})(t,\bar{t}) 
&= -\frac{e^{\phi(-t,\bar{t})}}{q(-t,\bar{t})},
&\qquad
s_1^\mathrm{L}
(e^{\phi})(t,\bar{t}) 
&= -e^{\phi(-t,\bar{t})}r(-t,\bar{t}),
\\
\label{szeroRe}
s_0^\mathrm{R}
(e^{\phi})(t,\bar{t}) 
&= \frac{e^{\phi(t,-\bar{t})}}{\bar{r}(t,-\bar{t})},
&\qquad
s_1^\mathrm{R}
(e^{\phi})(t,\bar{t}) 
&= e^{\phi(t,-\bar{t})}\bar{q}(t,-\bar{t}).
\end{alignat}

Finally,
we construct the Dynkin diagram automorphism $\pi$ by
\begin{equation}
\label{piaction}
\begin{aligned}
 &\pi(g_{<0})(t,\bar{t}) = P^{-1}g_{<0}(-t, -\bar{t})P,
\\
 &\pi(g_0)(t,\bar{t}) = P^{-1}g_0(-t,-\bar{t})P,
\\
 &\pi(g_{>0})(t,\bar{t}) = P^{-1}g_{>0}(-t, -\bar{t})P,
\end{aligned}
\end{equation}
where $P = \begin{bmatrix} 0 & \zeta^{-1/2} \\ -\zeta^{1/2} & 0 \end{bmatrix}$.
This action was also introduced in ref.~\cite{KK1}.
The relation of the independent variables are
$\pi(e^{\phi(t,\bar{t})}) = e^{-\phi(-t,-\bar{t})}$ and
\begin{equation}
\label{defofpi}
\left\{
\begin{aligned}
\pi(q)(t,\bar{t}) &= -r(-t,-\bar{t}),
\\
\pi(r)(t,\bar{t}) &= -q(-t,-\bar{t}),
\end{aligned}
\right.
 \qquad 
\left\{
\begin{aligned}
 \pi(\bar{q})(t,\bar{t}) &= -\bar{r}(-t,-\bar{t}),
 \\
 \pi(\bar{r})(t,\bar{t}) &= -\bar{q}(-t,-\bar{t}).
\end{aligned}
\right.
\end{equation}
\begin{theorem}
\label{thm1}
The transformations 
\eqref{sigmaaction},
\eqref{leftaction}, 
\eqref{rightaction} and \eqref{piaction} commute with 
the Sato-Wilson equations \eqref{homogSW}.
For the variables $q, r, \bar{q}$ and $\bar{r}$,
they satisfy the relations
\begin{equation}
\begin{aligned}
 &\sigma^2 = (s_0^\mathrm{L})^2 = (s_1^\mathrm{L})^2 =
 (s_0^\mathrm{R})^2 = (s_1^\mathrm{R})^2 =
 \pi^2 = \mathrm{Id}, \quad
\\
  &\pi s_0^\mathrm{L} = s_1^\mathrm{L} \pi, \quad
  \pi s_0^\mathrm{R} = s_1^\mathrm{R} \pi, \quad
\\
 &\sigma s_0^\mathrm{L} = s_1^\mathrm{R} \sigma,
 \quad
 \sigma s_1^\mathrm{L} = s_0^\mathrm{R} \sigma,
\end{aligned}
\label{relation-qr}
\end{equation}
and for the variable $\phi$,
\begin{equation}
\label{relation-phi}
\begin{aligned}
 &\sigma^2 = (s_0^\mathrm{L})^4 = (s_1^\mathrm{L})^4 =
 (s_0^\mathrm{R})^4 = (s_1^\mathrm{R})^4 =
 \pi^2 = \mathrm{Id}, \quad
\\
  &\pi s_0^\mathrm{L} = -s_1^\mathrm{L} \pi, \quad
  \pi s_0^{\mathrm{R}} = -s_1^{\mathrm{R}} \pi,
\\
 &\sigma s_0^\mathrm{L} = -s_1^\mathrm{R} \sigma,
 \quad
 \sigma s_1^\mathrm{L} = -s_0^\mathrm{R} \sigma.
\end{aligned}
\end{equation}
Especially, the group $\widetilde{W}_\mathrm{L} 
= \langle s_0^\mathrm{L},
s_1^\mathrm{L}, \pi \rangle$ and $\widetilde{W}_\mathrm{R}
= \langle s_0^\mathrm{R}, s_1^\mathrm{R}, \pi \rangle$ generated
by these transformations on $q, r, \bar{q}, \bar{r}$ are 
extended affine Weyl groups of type $A_1^{(1)}$.
\end{theorem}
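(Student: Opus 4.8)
The plan is to verify Theorem~\ref{thm1} in three stages: first compatibility with the Sato-Wilson equations, then the algebraic relations among the transformations, and finally the identification of $\widetilde{W}_\mathrm{L}$ and $\widetilde{W}_\mathrm{R}$ with extended affine Weyl groups of type $A_1^{(1)}$.

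\textbf{Stage 1: Commutation with the Sato-Wilson equations.} The key observation is that each of the six transformations is built by left- and right-multiplying the Gauss factors $g_{<0}$, $g_0$, $g_{>0}$ by explicit matrices and by reparametrizing the time flows $(t,\bar t)\mapsto(\pm t,\pm\bar t)$ or $(t,\bar t)\mapsto(\bar t,t)$. For $\sigma$ I would note that the swap $t_n\leftrightarrow \bar t_n$ together with $H_n\leftrightarrow H_{-n}$ (which in the realization \eqref{realization} is conjugation by the matrix interchanging $\Lambda$ and $\Lambda^{-1}$) carries the first pair of equations in \eqref{homogSW} into the second and vice versa, so $\sigma$ is a symmetry essentially by construction. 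For the left and right actions, the point is that multiplying $g=g_{<0}^{-1}g_0g_{>0}$ on the left by a constant group element $S_i$-type factor (as in \eqref{time-evolution}, where $g(0)$ is arbitrary) preserves the form of the evolution \eqref{time-evolution}; the subtlety is that $S_i$ is applied on the \emph{right} of $g_{<0}$, so I would check that the ``correction factors'' $G_i^{\mathrm L}$, $\bar G_i^{\mathrm L}$ (resp.\ $G_i^{\mathrm R}$, $\bar G_i^{\mathrm R}$) are precisely what is needed to restore the triangular normalization \eqref{defW}--\eqref{defWbar} after the multiplication by $S_i$ and the sign flip $t\mapsto -t$. Concretely: $s_i^{\mathrm L}$ acts on $g$ by $g(t,\bar t)\mapsto h\, g(-t,\bar t)\, S_i$ for a fixed $h$; since $g(-t,\bar t)$ still satisfies \eqref{derivative-g} (with a sign change on the $t$-derivatives absorbed by $S_i^{-1}H_nS_i=-H_n$ for the relevant generators), the transformed $g$ satisfies the same equations, and hence its Gauss factors satisfy \eqref{homogSW}. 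The role of $G_i^{\mathrm L}$ etc.\ is purely to implement the Gauss re-decomposition of $S_i\cdot(\text{triangular})$; one reads them off by a direct $2\times2$ factorization, using the leading coefficients \eqref{W1}. The Dynkin automorphism $\pi$ is conjugation by the constant matrix $P$ together with $(t,\bar t)\mapsto(-t,-\bar t)$; since $P$ interchanges $e_0^\pm\leftrightarrow e_1^\pm$ in the realization \eqref{realization} and $H_n\mapsto -H_n$, it permutes the Sato-Wilson equations among themselves.

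\textbf{Stage 2: The relations \eqref{relation-qr} and \eqref{relation-phi}.} These I would prove by direct computation on the level of the $2\times2$ matrices, using the explicit formulas \eqref{defofsigma}, \eqref{Finfty}, \eqref{GzeroGone}, \eqref{defofpi}, together with the $\tau$-function expressions \eqref{qrqbarrbar-tau} and the consequence \eqref{szeroLe}--\eqref{szeroRe}. For instance, to get $(s_0^{\mathrm L})^2=\mathrm{Id}$ on $q,r,\bar q,\bar r$ one composes the action twice, tracking the two factors of $t\mapsto -t$ (which cancel) and the product $G_0^{\mathrm L}(-t,\bar t)\cdot S_0\cdot(\text{image of }G_0^{\mathrm L})\cdot S_0$, using $S_0^2=-I$ and the birational formula $s_0^{\mathrm L}(q)=-1/q$ that follows from \eqref{Finfty}; the $-1$'s conspire so that on $q,r$ the map is an involution but on $\phi=\log(\tau^{(1)}_{00}/\tau^{(0)}_{00})$ only $(s_0^{\mathrm L})^4=\mathrm{Id}$. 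The mixed relations $\pi s_0^{\mathrm L}=s_1^{\mathrm L}\pi$ and $\sigma s_0^{\mathrm L}=s_1^{\mathrm R}\sigma$ reduce, after the reparametrizations are matched up, to the matrix identities $P^{-1}G_0^{\mathrm L}P = (\text{expression for }G_1^{\mathrm L}\text{ with }q,r\text{ swapped for }-r,-q)$ and similarly for $\sigma$; these are one-line checks against \eqref{Finfty}, \eqref{GzeroGone}. The sign discrepancies between \eqref{relation-qr} and \eqref{relation-phi} are exactly the statement that $S_i$ has order $4$ rather than $2$ in the level-zero realization \eqref{Weyls0s1}, and that $P$, $S_i$ act on $e^\phi$ with an extra sign that the birational formulas on $q,r$ do not see.

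\textbf{Stage 3: Identification with extended affine Weyl groups.} Once \eqref{relation-qr} is established, $\langle s_0^{\mathrm L},s_1^{\mathrm L}\rangle$ is a quotient of the infinite dihedral group (the affine Weyl group of type $A_1^{(1)}$: two involutions with no relation forcing their product to be finite order), and $\pi$ is an involution conjugating $s_0^{\mathrm L}$ to $s_1^{\mathrm L}$, i.e.\ realizing the Dynkin diagram automorphism; hence $\widetilde W_{\mathrm L}=\langle s_0^{\mathrm L},s_1^{\mathrm L}\rangle\rtimes\langle\pi\rangle$ has the presentation of the extended affine Weyl group $\widetilde W(A_1^{(1)})=W(A_1^{(1)})\rtimes\mathrm{Aut}(\text{Dynkin})$, and likewise for $\widetilde W_{\mathrm R}$. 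The one genuinely nontrivial input is that $s_0^{\mathrm L}s_1^{\mathrm L}$ has \emph{infinite} order on the variables $q,r,\bar q,\bar r$; this is not implied by the relations \eqref{relation-qr} alone, so I would either exhibit the action of $T=S_0S_1$ on $\tau$-functions through \eqref{tau-mn} (shifting the lattice index, hence of infinite order) and transport it through \eqref{qrqbarrbar-tau}, or argue that the induced action on the similarity parameters in Section~4 is a nontrivial translation. I expect \textbf{this infiniteness check to be the main obstacle} — everything else is bookkeeping with explicit $2\times2$ matrices, whereas ruling out an accidental finite-order coincidence requires a genuinely global input from the $\tau$-function lattice structure.
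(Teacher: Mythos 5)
Your proposal is correct and follows essentially the same route as the paper: the paper's proof is precisely the direct $2\times 2$ matrix verification you outline in Stages 1--2 (it records only the sample computation $(s_i^{\mathrm L})^2(g_{<0})=(-I)\,g_{<0}\,(-I)=g_{<0}$, $(s_i^{\mathrm L})^2(g_0)=(-I)\,g_0$, using $S_i^2=-I$, which is exactly the source of the order-$2$ versus order-$4$ discrepancy between \eqref{relation-qr} and \eqref{relation-phi} that you identify). Your Stage 3 observation --- that the relations \eqref{relation-qr} alone do not rule out $s_0^{\mathrm L}s_1^{\mathrm L}$ having finite order, so one must appeal to the shift action on the $\tau$-function lattice \eqref{tau-mn-real} or on the similarity parameters $(\alpha,\beta)$ --- is a genuine supplement, since the paper's proof is silent on this point and leaves it implicit in the later sections. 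One small slip in Stage 1: on the full element $g=g_{<0}^{-1}g_0g_{>0}$ the left action is $g(t,\bar t)\mapsto S_i^{-1}g(-t,\bar t)$, i.e.\ constant \emph{left} multiplication absorbed into the initial value $g(0)$ via $S_i^{-1}H_nS_i=-H_n$, not right multiplication by $S_i$ as in your displayed formula; your parenthetical remark shows you have the correct mechanism, so this does not affect the argument.
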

\begin{proof}
This is checked by a direct calculation.
For example,
by the relation $S_i^2 = -I$, we have
\begin{align*}
 &(s_i^\mathrm{L})^2(g_{<0})(t,\bar{t}) = (-I)g_{<0}(t,\bar{t})(-I) = g_{<0}(t,\bar{t}),
\\
 &(s_i^\mathrm{L})^2(g_{0})(t,\bar{t}) = (-I)g_{0}(t,\bar{t}),
\end{align*}
and $(s_i^\mathrm{L})^2(g_{>0})(t,\bar{t}) = g_{0}(t,\bar{t})$.
\end{proof}
We remark that the definition of transformations \eqref{leftaction}, 
\eqref{rightaction} is based on the Gauss decomposition of 
the loop group,
which is discussed in \cite{Nhon}, \cite{KK1}.

\subsection{Relations among the $\tau$-functions on the $A_1^{(1)}$-root lattice}

By Theorem \ref{thm1},
we automatically obtain discrete integrable systems
by composing the basic transformations given in previous section.
First we consider the
compositions $T_\mathrm{L} := s^\mathrm{L}_0s^\mathrm{L}_1$ and
$T_\mathrm{R} := s^\mathrm{R}_0s^\mathrm{R}_1$.
They correspond to parallel transformations on 
the $A_1^{(1)}$-root lattice.

The functions $\tau_{m,n}^{(i)}(t,\bar{t})$ ($i=0,1$, $m,n \in \IZ$) defined
by \eqref{tau-mn} are realized as
\begin{equation}
\label{tau-mn-real}
 \tau_{m,n}^{(i)}(t,\bar{t}) = T_\mathrm{L}^mT_\mathrm{R}^n\tau_{0,0}^{(i)}(t,\bar{t}).
\end{equation}
Because the transformations $T_\mathrm{L}^k$ and $T_\mathrm{R}^k$ act
on the left and right index respectively,
e.g.
$$
 T_\mathrm{L}^k(\tau_{m,n}^{(i)}) = \tau_{m+k,n}^{(i)},
\qquad
 T_\mathrm{R}^k(\tau_{m,n}^{(i)}) = \tau_{m,n+k}^{(i)},
$$ 
we obtain a sequence of $\tau$-functions.
\begin{equation}
    \xymatrix{ 
\tau_{-1,1}^{(i)} \ar[rr]_{T_{\mathrm{L}}} &  & 
\tau_{0,1}^{(i)} \ar[rr]_{T_{\mathrm{L}}} & & 
\tau_{1,1}^{(i)} 
\\
& &  &  & 
\\ 
\tau_{-1,0}^{(i)} \ar[rr]_{T_{\mathrm{L}}}  \ar[uu]_{T_\mathrm{R}}&  & 
\tau_{0,0}^{(i)}  \ar[rr]_{T_\mathrm{L}}  \ar[uu]_{T_\mathrm{R}}
& & \tau_{1,0}^{(i)}  \ar[uu]_{T_\mathrm{R}}\\ 
& &  & &  
\\ 
\tau_{-1,-1}^{(i)} \ar[rr]_{T_{\mathrm{L}}}  \ar[uu]_{T_\mathrm{R}}& & 
\tau_{0,-1}^{(i)} \ar[rr]_{T_{\mathrm{L}}} \ar[uu]_{T_\mathrm{R}} &  & 
\tau_{1,-1}^{(i)}  \ar[uu]_{T_\mathrm{R}}
    }
\end{equation}

\begin{proposition}
The actions of the six basic transformations $\sigma, 
s_0^\mathrm{L}, s_1^\mathrm{L}, s_0^\mathrm{R}, s_1^\mathrm{R}$ and $\pi$ on
the $\tau$-functions $\tau_{m,n}$ $(m,n \in \IZ)$ are given by the
following table:
\begin{equation}
\label{Weylaction-tau}
\begin{array}{c|cc}
 & \tau_{m,n}^{(0)}(t,\bar{t}) & \tau_{m,n}^{(1)}(t,\bar{t})
\\
\hline
\sigma &
 \tau_{-n,-m}^{(1)}(\bar{t}, t)
& \tau_{-n,-m}^{(0)}(\bar{t}, t)
\\
 s_0^\mathrm{L}
& (-1)^m\tau^{(0)}_{1-m,n}(-t, \bar{t}) 
& (-1)^m\tau^{(1)}_{-m,n}(-t, \bar{t}) 
\\
 s_1^\mathrm{L}
& (-1)^m\tau^{(0)}_{-m,n}(-t, \bar{t}) 
& (-1)^{m+1}\tau^{(1)}_{-m-1,n}(-t, \bar{t}) 
\\
 s_0^\mathrm{R}
& (-1)^n\tau^{(0)}_{m,1-n}(t, -\bar{t}) 
& (-1)^n\tau^{(1)}_{m,-n}(t, -\bar{t}) 
\\
 s_1^\mathrm{R}
& (-1)^n\tau^{(0)}_{m,-n}(t, -\bar{t}) 
& (-1)^{n+1}\tau^{(1)}_{m,-n-1}(t, -\bar{t}) 
\\
\pi
& \tau_{-m,-n}^{(1)}(-t, -\bar{t})
& \tau_{-m,-n}^{(0)}(-t, -\bar{t})
\end{array}
\end{equation}

\end{proposition}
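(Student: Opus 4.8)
The plan is to derive the $\tau$-function transformation table from the definition \eqref{tau-mn}, $\tau_{m,n}^{(i)} = \langle v_i|T^{-m}g T^n|v_i\rangle$, by tracking how each basic transformation acts on the three pieces: the group element $g = g_{<0}^{-1}g_0 g_{>0}$ (equivalently the triple $(g_{<0}, g_0, g_{>0})$ via the formulas \eqref{leftaction}, \eqref{rightaction}, \eqref{sigmaaction}, \eqref{piaction}), the operators $T^{\pm m}$ and the highest/lowest weight vectors. Since we already know from \eqref{tau-mn-real} that $\tau_{m,n}^{(i)} = T_\mathrm{L}^m T_\mathrm{R}^n \tau_{0,0}^{(i)}$ and that $T_\mathrm{L}, T_\mathrm{R}$ shift the indices, it suffices to verify the table on $\tau_{0,0}^{(i)} = \langle v_i|g|v_i\rangle = e^{\phi_i}$ and then propagate via the braid/commutation relations \eqref{relation-qr}, \eqref{relation-phi}. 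So the first step is: compute the action of each of the six transformations on $\tau_{0,0}^{(0)} = e^{\phi_0}$ and $\tau_{0,0}^{(1)} = e^{\phi_1}$ directly. For $\pi$ and $\sigma$ this is immediate from \eqref{defofsigma}, \eqref{defofpi} together with $e^\phi = e^{\phi_1 - \phi_0} = \tau_{0,0}^{(1)}/\tau_{0,0}^{(0)}$ and the normalization $c = h_0 + h_1$ acting as a constant, which fixes $\phi_0 + \phi_1$; hence $\sigma$ swaps the two basic representations and sends $\phi_i \mapsto -\phi_i(\bar t, t)$ up to the overall shift, giving the stated $\tau_{-n,-m}^{(1-i)}(\bar t, t)$ after accounting for how $\sigma$ exchanges $T_\mathrm{L}$ and $T_\mathrm{R}$ (which follows from $\sigma s_0^\mathrm{L} = s_1^\mathrm{R}\sigma$, $\sigma s_1^\mathrm{L} = s_0^\mathrm{R}\sigma$ in \eqref{relation-qr}, so $\sigma T_\mathrm{L} = (T_\mathrm{R})^{-1}\sigma$ up to sign bookkeeping).

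Next I would handle $s_0^\mathrm{L}, s_1^\mathrm{L}$ (and then $s_0^\mathrm{R}, s_1^\mathrm{R}$ by the $\sigma$-conjugation relations, avoiding duplicate work). The key computation is $s_i^\mathrm{L}(\tau_{0,0}^{(j)})$ from \eqref{szeroLe}: using $\tau_{0,0}^{(0)}\tau_{0,0}^{(1)} = $ const and $e^\phi = \tau_{0,0}^{(1)}/\tau_{0,0}^{(0)}$, the formulas $s_0^\mathrm{L}(e^\phi) = -e^{\phi(-t,\bar t)}/q(-t,\bar t)$ and the relation $q = -\tau_{1,0}^{(0)}/\tau_{0,0}^{(0)}$ from \eqref{qrqbarrbar-tau} combine to give $s_0^\mathrm{L}(\tau_{0,0}^{(1)})$ proportional to $\tau_{1,0}^{(0)}(-t,\bar t)$ etc. One must pin down the multiplicative normalization — i.e. check that $s_i^\mathrm{L}$ acts on the actual matrix elements (not just ratios) in the way claimed — which is where the prefactor structure $G_i^\mathrm{L} g_{<0} S_i$ in \eqref{leftaction} enters: the left prefactor $G_i^\mathrm{L}$ contributes a scalar to $\langle v_i|\cdots|v_i\rangle$ and the right multiplication by $S_i$ shifts the lattice index $m \mapsto m \pm 1$ (since $S_i$ acting on $|v_i\rangle$ or $T^n|v_i\rangle$ either annihilates a direction or produces $e_i^-|v_i\rangle$, cf. the computation $T|v_0\rangle = e_0^-|v_0\rangle$ already done in the text). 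The signs $(-1)^m$ come from $S_i^2 = -I$ applied repeatedly as one conjugates $S_i$ past $T^{-m}$.

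The general-$(m,n)$ entries then follow formally: write $\tau_{m,n}^{(i)} = \langle v_i|T^{-m}gT^n|v_i\rangle$, substitute the transformed $g$, and move $G_i^\mathrm{L}(-t,\bar t)$ to the left until it hits $\langle v_i|T^{-m}$, and $S_i$ to the right until it hits $T^n|v_i\rangle$; each pass of $S_i$ through one factor of $T = S_0S_1$ is governed by the affine Weyl group relations among $S_0, S_1$, and the net effect is the index shift plus a power of $(-1)$ recorded in the table. The time-argument change $t \mapsto -t$ for the L-transformations (and $\bar t \mapsto -\bar t$ for R) is built into \eqref{leftaction}, \eqref{rightaction} and just carries through. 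I expect the \textbf{main obstacle} to be the careful sign and normalization bookkeeping in the $s_i^\mathrm{L}, s_i^\mathrm{R}$ cases — specifically, confirming that the scalar prefactors coming from $G_i^\mathrm{L}, \bar G_i^\mathrm{L}$ and the $S_i^2 = -I$ identities assemble into exactly $(-1)^m$ (resp. $(-1)^{m+1}$ for the $\tau^{(1)}$ entry under $s_1^\mathrm{L}$) and nothing more; the index shifts and the swap of representations under $\sigma$ and $\pi$ are comparatively mechanical once \eqref{tau-mn-real} and Theorem \ref{thm1} are in hand. Finally, consistency of the table with the known relations $q = -\tau_{1,0}^{(0)}/\tau_{0,0}^{(0)}$, $r = \tau_{-1,0}^{(1)}/\tau_{0,0}^{(1)}$, $\bar q = -\tau_{0,-1}^{(1)}/\tau_{0,0}^{(1)}$, $\bar r = \tau_{0,1}^{(0)}/\tau_{0,0}^{(0)}$ and with \eqref{defofsigma}, \eqref{defofpi}, \eqref{szeroLe}, \eqref{szeroRe} provides a complete set of cross-checks, so the proof reduces to stating "direct computation using \eqref{tau-mn}, \eqref{tau-mn-real}, Theorem \ref{thm1}, and the prefactor structure of \eqref{leftaction}--\eqref{piaction}."
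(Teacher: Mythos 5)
There is a genuine gap in your primary route. The six transformations are \emph{defined} only on the level-zero realization \eqref{realization} (where $c\mapsto 0$), i.e.\ on the $2\times 2$ matrices $g_{<0},g_0,g_{>0}$ and hence on the functions $q,r,\bar q,\bar r,\phi,W_i,\bar W_i$; the $\tau$-functions, by contrast, are level-one matrix coefficients $\langle v_i|T^{-m}gT^n|v_i\rangle$. A dressing matrix such as $G_i^{\mathrm L}$ in \eqref{Finfty}, with entries $-1/q$, $\zeta$, $-q$, is not an operator on $L_0\oplus L_1$, so the step ``move $G_i^{\mathrm L}$ to the left until it hits $\langle v_i|T^{-m}$, where it contributes a scalar'' is not well-defined; to make it rigorous you would have to lift the level-zero transformation to the centrally extended group, and it is exactly the central (scalar) factor of that lift that is undetermined. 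This is why the paper never computes the action on individual $\tau$'s directly: its proof works with the level-zero quantity $e^{\phi_{m,n}}=\tau^{(1)}_{m,n}/\tau^{(0)}_{m,n}$, reduces $s_1^{\mathrm R}(e^{\phi_{m,n}})$ to the basic formula \eqref{szeroRe} by commuting $s_1^{\mathrm R}$ past $(s_0^{\mathrm R}s_1^{\mathrm R})^{n}$, rewrites the result via \eqref{qrqbarrbar-tau}, and then states explicitly that the remaining overall signs are \emph{fixed by the convention} $s_1^{\mathrm R}(\tau^{(0)}_{m,0})=\tau^{(0)}_{m,0}$. Your expectation that the factors $(-1)^m$, $(-1)^{m+1}$ ``assemble'' from $S_i^2=-I$ bookkeeping therefore cannot be realized as stated: the ratio computation only fixes the relative sign between the $\tau^{(0)}$ and $\tau^{(1)}$ columns, and the common factor is a normalization choice, not a consequence.

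Two further points. First, the auxiliary identity $\tau^{(0)}_{0,0}\tau^{(1)}_{0,0}=\mathrm{const}$ that you invoke for the $\sigma$ and $\pi$ rows is false: by \eqref{w-tau} one has $\partial_{t_1}\log\bigl(\tau^{(0)}_{0,0}\tau^{(1)}_{0,0}\bigr)=2(w_{22}-w_{21})$, which does not vanish in general (the determinant condition gives $w_{21}+w_{22}=qr$, not $w_{21}=-w_{22}$); the sum $\phi_0+\phi_1$ couples to $c$ and is invisible in the level-zero realization, which is the same normalization issue again. Second, your fallback strategy --- verify the table on $\tau^{(i)}_{0,0}$ and propagate by the group relations, cross-checking against \eqref{qrqbarrbar-tau}, \eqref{szeroLe}, \eqref{szeroRe} --- is essentially the paper's argument, but as written it still presupposes that the action on the individual $\tau^{(i)}_{0,0}$ is known, which is precisely what the level-zero data do not determine; to close the argument you must either compute only ratios and then impose a normalization (as the paper does), or construct the level-one lift and compute its cocycle, neither of which your proposal carries out.
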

\begin{proof}
We prove only the case of $s_1^\mathrm{R}$.
The corresponding equations for the other elements are obtained similarly.
When $n\ge 0$,
the action of $s_1^\mathrm{R}$ on 
$\exp(\phi_{m,n}) 
:= T_m^\mathrm{L}T_n^\mathrm{R}\exp(\phi_{0,0})
= \tau_{m,n}^{(1)}/\tau_{m,n}^{(0)}$ is
\begin{align*}
 s_1^\mathrm{R}(\exp(\phi_{m,n}))
&= s_1^\mathrm{R}\left(s_0^\mathrm{R}s_1^\mathrm{R}\right)^n(\exp(\phi_{m,0}))
\\
&= \left(s_1^\mathrm{R}s_0^\mathrm{R}\right)^n s_1^\mathrm{R}(\exp(\phi_{m,0}))
= s_1^\mathrm{R}(\exp(\phi_{m,-n})).
\end{align*}
Due to the relation \eqref{szeroRe} we have
$$
s_1^\mathrm{R}(\exp(\phi_{m,-n}))
= (\tau^{(1)}_{m,-n}/\tau_{m,-n}^{(0)})
\times(-\tau^{(1)}_{m,-n-1}/\tau_{m,-n}^{(1)})
= -\tau^{(1)}_{m,-n-1}/\tau_{m,-n}^{(0)}.
$$ 
For $n< 0$,
the action of $s_1^\mathrm{R}$ on $\exp(\phi_{m,n})$ is given by
\begin{align*}
 s_1^\mathrm{R}(\exp(\phi_{m,n}))
&= s_1^\mathrm{R}\left(s_1^\mathrm{R}s_0^\mathrm{R}\right)^{-n}
(\exp(\phi_{m,0}))
\\
&= -\left(s_0^\mathrm{R}s_1^\mathrm{R}\right)^{-n-1} s_0^\mathrm{R}
(\exp(\phi_{m,0}))
= -s_0^\mathrm{R}(\exp(\phi_{m,-n-1})).
\end{align*}
By the relation \eqref{szeroRe},
we obtain $-s_0^\mathrm{R}(\exp(\phi_{m,-n-1}))
= -\tau^{(1)}_{m,-n-1}/\tau_{m,-n}^{(0)}$.
The signs of the transformed $\tau$-function is 
determined uniquely
by supposing $s_1^\mathrm{R}(\tau_{m,0}^{(0)})
= \tau_{m,0}^{(0)}$.
\end{proof}

Now we represent the variables \eqref{rhat} and \eqref{qbarhat}, 
obtained by 
the Miura transformations, in terms of the $\tau$-functions.
By definition, 
the actions of $T_\mathrm{L}$ and $T_\mathrm{R}$ on $g_{<0}$ \eqref{defW} are
\begin{alignat}{2}
 T_{\mathrm{L}}(g_{<0})(t,\bar{t}) 
&=  R_{\mathrm{L}}(t,\bar{t})g_{<0}(t, \bar{t}) S_0S_1,
&\qquad
 &R_\mathrm{L} := \begin{bmatrix} -\hat{q}/q & -q \\ 0 & -q/\hat{q} \end{bmatrix}
+\begin{bmatrix} 1 & 0 \\ 1/\hat{q} & 0 \end{bmatrix}\zeta,
\\
 T_{\mathrm{R}}(g_{<0})(t,\bar{t}) 
&=  R_{\mathrm{R}}(t,\bar{t})g_{<0}(t, \bar{t}),
&\qquad
 &R_\mathrm{R} 
= \begin{bmatrix} 0 & -e^{2\phi}/\bar{r} \\
   0 & -\bar{r}/\hat{\bar{r}}
\end{bmatrix}\zeta^{-1}
+\begin{bmatrix} 
 1 & 0 \\
\bar{r}^2/e^{2\phi}\hat{\bar{r}} & 1
\end{bmatrix}.
\end{alignat}
Notice that in these transformations
the signs of the dependent variables $t$ and $\bar{t}$ do 
not change.
So, by the relations \eqref{del-B}
the action $T_\mathrm{L}$  can be realized as
\begin{equation}
\begin{aligned}
 \frac{\partial}{\partial t_n} - T_\mathrm{L}(B_n) 
=& R_\mathrm{L} \cdot
\left( \frac{\partial}{\partial t_n} - B_n \right) 
\cdot R_{\mathrm{L}}^{-1}
\\
=& \frac{\partial}{\partial t_n}
- R_\mathrm{L} B_n  R_\mathrm{L}^{-1} 
- \frac{\partial R_\mathrm{L}}{\partial t_n}R_\mathrm{L}^{-1} 
\end{aligned}
\label{SchlesingerB}
\end{equation}
and by replacing $R_\mathrm{L}$ to $R_{\mathrm{R}}$, we get $T_\mathrm{R}$.
For example,
the left action is obtained by
\begin{equation*}
 T_\mathrm{L}(q) = \frac{q''}{4} - \frac{(q')^2}{4q} -\frac{q^2r'}{2} - q^3r^2,
\qquad
 T_\mathrm{L}(r) = \frac{1}{\hat{q}}.
\end{equation*}
These relations provide the following formulas:
\begin{align}
\label{Miura-tau}
 \hat{q} &= \frac{\tau_{1,0}^{(1)}}{\tau_{0,0}^{(1)}},
\qquad
 \hat{r} = -\frac{\tau_{-1,0}^{(0)}}{\tau_{0,0}^{(0)}},
\qquad
 \hat{\bar{q}} = \frac{\tau_{0,-1}^{(0)}}{\tau_{0,0}^{(0)}},
\qquad
 \hat{\bar{r}} = -\frac{\tau_{0,1}^{(1)}}{\tau_{0,0}^{(1)}}.
\end{align}

We now deduce the 
differential equations 
that are satisfied by the $\tau$-functions $\tau_{m,n}^{(i)}$.
By using the correspondence 
\eqref{qrqbarrbar-tau},
\eqref{w-tau}, \eqref{Miura-tau} and the 
action of $T^m_\mathrm{L}$, $T^n_\mathrm{R}$,
the equations \eqref{w21'} can be rewritten as
\begin{equation}
\begin{aligned}
D_{t_1}^2 \tau^{(i)}_{m,n}\cdot \tau^{(i)}_{m,n} 
&= 8\tau^{(i)}_{m+1,n} \tau^{(i)}_{m-1,n}, 
\\
D_{t_1}D_{t_2} \tau^{(i)}_{m,n}\cdot \tau^{(i)}_{m,n} 
&= 4D_{t_1}\tau^{(i)}_{m+1,n} \cdot \tau^{(i)}_{m-1,n}.
\end{aligned}
\end{equation}
Here $D_x$ is the Hirota bilinear operator defined
for a pair of functions $f$, $g$ and for an arbitrary polynomial $Q$ as follows:
$$
Q(D_x)f(x) \cdot g(x)  = Q(\frac{\partial}{\partial y})
(f(x + y)g(x - y))|_{y=0}.
$$
Furthermore, we obtain
\begin{align}
&(D_{t_1}^2 + 2D_{t_2}) \tau^{(i)}_{m,n} \cdot \tau^{(i)}_{m+1,n} = 0,
\\
&(D_{t_1}^3 - 4D_{t_3}) \tau^{(i)}_{m,n} \cdot \tau^{(i)}_{m+1,n} = 0
\end{align}
from the equations \eqref{defofU2n+1}.
Thanks to the symmetry $\sigma$ \eqref{sigmaaction},
we have similar equations with respect to $\bar{t}$ and the 
right index $n$.
For example, we have
\begin{equation}
D_{\bar{t}_1}^2 \tau^{(i)}_{m,n}\cdot \tau^{(i)}_{m,n} 
= 8\tau^{(i)}_{m,n+1} \tau^{(i)}_{m,n-1}.
\end{equation}

\subsection{Relations among the $\tau$-functions on the $A_1^{(1)}$-weight lattice}

Next we consider the compositions $\mathcal{T}_1 
:= s_0^\mathrm{R}s_0^\mathrm{L}\pi$ and $\mathcal{T}_2 
:= s_1^\mathrm{R}s_0^\mathrm{L}\pi$.
They correspond to shift operators on the $A_1^{(1)}$-weight lattice \cite{Nhon}.
By definition,
\begin{align}
&\left\{
\begin{aligned}
&\mathcal{T}_1(\tau^{(0)}_{m,n})(t,\bar{t}) 
=(-1)^{m+n}\tau^{(1)}_{m,n}(t,\bar{t}),
\\
&\mathcal{T}_1(\tau^{(1)}_{m,n})(t,\bar{t}) 
=(-1)^{m+n}\tau^{(0)}_{m+1,n+1}(t,\bar{t}),
\end{aligned}
\right.
\\
&\left\{
\begin{aligned}
&\mathcal{T}_2(\tau^{(0)}_{m,n})(t,\bar{t}) 
=(-1)^{m+n-1}\tau^{(1)}_{m,n-1}(t,\bar{t}),
\\
&\mathcal{T}_2(\tau^{(1)}_{m,n})(t,\bar{t}) 
=(-1)^{m+n}\tau^{(0)}_{m+1,n}(t,\bar{t})
\end{aligned}
\right.
\end{align}
hold.
So we arrange the family of $\tau$-functions on the
following diagram:
\begin{equation}
\label{weight-lattice}
    \xymatrix{ 
\tau_{-1,1}^{(0)} \ar[rd]_{\mathcal{T}_2} &  & 
\tau_{0,1}^{(0)} \ar[rd]_{\mathcal{T}_2} & & 
\tau_{1,1}^{(0)} \\
& \tau_{-1,0}^{(1)} \ar[rd]_{\mathcal{T}_2} \ar[ru]_{\mathcal{T}_1} &  & 
\tau_{0,0}^{(1)} \ar[rd]_{\mathcal{T}_2} \ar[ru]_{\mathcal{T}_1} & 
\\ 
\tau_{-1,0}^{(0)} \ar[rd]_{\mathcal{T}_2} \ar[ru]_{\mathcal{T}_1} &  & 
\tau_{0,0}^{(0)} \ar[rd]_{\mathcal{T}_2}  
\ar[ru]_{\mathcal{T}_1}  \ar[rr]_{T_\mathrm{L}}  \ar[uu]_{T_\mathrm{R}}
& & 
\tau_{1,0}^{(0)} 
\\ 
& \tau_{-1,-1}^{(1)} \ar[rd]_{\mathcal{T}_2} \ar[ru]_{\mathcal{T}_1}  &  & 
\tau_{0,-1}^{(1)} \ar[rd]_{\mathcal{T}_2} \ar[ru]_{\mathcal{T}_1}  &  
\\ 
\tau_{-1,-1}^{(0)} \ar[ru]_{\mathcal{T}_1}  & & 
\tau_{0,-1}^{(0)} \ar[ru]_{\mathcal{T}_1}  &  & 
\tau_{1,-1}^{(0)} 
    }
\end{equation}

The structure of the transformations \eqref{Weylaction-tau} is 
thought of as the mirror image of this diagram.
In fact, $s_0^\mathrm{L}$ and $s_1^\mathrm{L}$ are 
reflections with respect to the vertical lines
which link $\tau^{(1)}_{0,0}$ and $\tau^{(1)}_{0,-1}$,
$\tau^{(0)}_{0,0}$ and $\tau^{(0)}_{0,1}$ respectively.
Also, $s_0^\mathrm{R}$ and $s_1^\mathrm{R}$ are 
reflections with respect to the horizontal lines
which link $\tau^{(1)}_{0,0}$ and $\tau^{(1)}_{-1,0}$,
$\tau^{(0)}_{0,0}$ and $\tau^{(0)}_{1,0}$ respectively.
Furthermore,
the composition $\pi\sigma$ represents the reflection
with respect to the slanted line 
through $\tau_{0,0}^{(0)}$ and $\tau_{0,0}^{(1)}$.

Below we list the Hirota bilinear equations of 
mPLR hierarchy.
From the differential equations for $\phi$ \eqref{defofU2n} we
get 
\begin{equation}
\label{bilinear-t1}
\begin{aligned}
 &D_{t_1} \tau^{(0)}_{m,n}\cdot\tau^{(1)}_{m,n} = 2\tau^{(1)}_{m-1,n}\tau^{(0)}_{m+1,n},
\end{aligned}
\end{equation}
and
\begin{equation}
\label{bilinear-t2t3}
\begin{aligned}
 &(D_{t_1}^2+2D_{t_2})\tau^{(0)}_{m,n} \cdot \tau^{(1)}_{m,n} 
= 0,
\\
 &(D_{t_1}^3 -4D_{t_3}) \tau^{(1)}_{m,n}\cdot\tau^{(0)}_{m,n} 
= 24\tau^{(0)}_{m-1,n}\tau^{(1)}_{m+1,n}.
\end{aligned}
\end{equation}
The differential equations for $q, r, \bar{q}, \bar{r}$ \eqref{defofU2n+1} imply
\begin{equation}
\label{6-siki-tau}
\begin{aligned}
&D_{t_1} \tau^{(0)}_{m,n+1}\cdot\tau^{(0)}_{m,n} 
= 2\tau^{(1)}_{m-1,n}\tau^{(1)}_{m,n},
\end{aligned}
\end{equation}
and the differential equations for $W_2$ and $\bar{W}_2$ \eqref{W2tnbibun} give rise to
\begin{align}
\label{D1D1bar00}
 &(D_{t_1}D_{\bar{t}_1} + 4)\tau^{(0)}_{m,n}\cdot\tau^{(0)}_{m,n}  
 = -8\tau^{(1)}_{m-1,n} \tau^{(1)}_{m,n-1}.
\end{align}
This relation is the 2D Toda equation.

We also have an algebraic equation.
The transformation $\mathcal{T}_1$ for $e^\phi$ is 
$$
\mathcal{T}_1(e^\phi) = s_0^\mathrm{R}s_0^\mathrm{L}\pi(e^\phi) 
= e^\phi-q\bar{r}e^{-\phi}.
$$
Rewriting this as a relation among $\tau$-functions,
we have
\begin{equation}
\begin{aligned}
 (\tau^{(0)}_{m,n})^2 &= \tau^{(1)}_{m,n}\tau^{(1)}_{m-1,n-1} 
- \tau^{(1)}_{m-1,n}\tau^{(1)}_{m,n-1}.
\end{aligned}
\end{equation}

We remark 
that the action of operators $\mathcal{T}_1, \mathcal{T}_2$ on
the linear operators of mPLR hierarchy are obtained by
$$
 \frac{\partial}{\partial t_n} - \mathcal{T}_i(B_n) 
= \mathcal{R}_i \left( \frac{\partial}{\partial t_n} - B_n \right) 
\mathcal{R}_i^{-1},
$$
where
\begin{align}
\label{zerozero}
&\mathcal{R}_1 = 
\begin{bmatrix}
 0 & -q \\ 0 & e^{2\phi}/(q\bar{r} - e^{2\phi})  
\end{bmatrix}\zeta^{-1/2}
+ \begin{bmatrix}
 1 & 0 \\ \bar{r}/(e^{2\phi}-q\bar{r} ) & 0 
\end{bmatrix}\zeta^{1/2},
\\
\label{ichizero}
&\mathcal{R}_2 = 
\begin{bmatrix}
  q/e^{2\phi}\bar{q} &  -q \\  0 & 0 
\end{bmatrix}\zeta^{-1/2} 
+ \begin{bmatrix}
 1 & 0 \\  -1/q & 0 
\end{bmatrix}\zeta^{1/2}.
\end{align}

\section{Similarity reduction}

\subsection{Similarity conditions for the wave functions}

Now we construct a similarity reduction of the mPLR hierarchy.
Let $\lambda \in \IC$ be a complex parameter and
set
$t_\lambda = (\lambda t_1, \lambda^2 t_2, \dots, \lambda^nt_n, \dots)$,
$\bar{t}_{\lambda^{-1}}
= (\lambda^{-1}\bar{t}_1, \lambda^{-2}\bar{t}_2, \dots, \lambda^{-n}\bar{t}_n, \dots)$.
We consider a one-parameter 
transformation for $g_{>0}$, $g_{<0}$ and $g_0$.
\begin{proposition}
\label{propSR}
If $g_{<0}$ and $g_{\ge 0} = g_0g_{>0}$ solve the Sato-Wilson equations \eqref{homogSW},
the functions
\begin{equation}
\label{hom-souji-1}
\begin{aligned}
 g_{<0}^\lambda(\zeta; t, \bar{t}) 
&:= \lambda^{\alpha H_0} g_{<0}(\lambda^{-1}\zeta; t_\lambda, \bar{t}_{\lambda^{-1}}) 
\lambda^{-\alpha H_0},
\\
 g_0^\lambda(t, \bar{t}) 
&:=
   \lambda^{\alpha H_0} g_0(t_\lambda, \bar{t}_{\lambda}) \lambda^{-\beta H_0},
\\
 g_{>0}^\lambda (\zeta; t, \bar{t}) 
&:= \lambda^{\beta H_0} 
g_{>0}(\lambda^{-1}\zeta; t_\lambda, \bar{t}_{\lambda^{-1}})
\lambda^{-\beta H_0}
\end{aligned}
\end{equation}
also solve \eqref{homogSW}.
\end{proposition}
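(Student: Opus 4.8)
The plan is a direct verification. The transformation \eqref{hom-souji-1} is the composition of three elementary operations: the dilation $\zeta\mapsto\lambda^{-1}\zeta$ of the spectral parameter, the rescaling of times $(t,\bar t)\mapsto(t_\lambda,\bar t_{\lambda^{-1}})$, and conjugation by the constant torus elements $\lambda^{\alpha H_0}$, $\lambda^{\beta H_0}$ as prescribed. I would show that each of these is compatible with the algebraic structure underlying \eqref{homogSW}, and that the powers of $\lambda$ they generate cancel.

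First I would isolate the two facts that carry the argument. (i) Since $H_n\mapsto\mathrm{diag}(\zeta^n,-\zeta^n)$, the substitution $\zeta\mapsto\lambda^{-1}\zeta$ multiplies $H_n$ by $\lambda^{-n}$; hence $\lambda^{n}H_n\big|_{\lambda^{-1}\zeta}=H_n\big|_{\zeta}$ and, likewise, $\lambda^{-n}H_{-n}\big|_{\lambda^{-1}\zeta}=H_{-n}\big|_{\zeta}$. (ii) Conjugation by the diagonal constant matrix $\lambda^{\gamma H_0}$ fixes every diagonal matrix --- in particular every $H_m$ and every coefficient $W_i$, $\bar W_i$, $\Phi$ --- and merely rescales off-diagonal entries; therefore it preserves the subgroups $\exp\pi(\hat{\mathfrak{n}}_\pm)$ and $\exp\pi(\hat{\mathfrak{h}}')$ and, crucially, commutes with the truncations $(\,\cdot\,)_{<0}$ and $(\,\cdot\,)_{\ge0}$, since it sends each $\Lambda$-homogeneous component to a diagonal-matrix multiple of a $\Lambda$-homogeneous component of the same degree. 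The $\zeta$-dilation enjoys the same compatibility with the truncations.

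Next I would assemble these. In $g_{\ge0}^\lambda:=g_0^\lambda g_{>0}^\lambda$ the inner factors $\lambda^{-\beta H_0}$ and $\lambda^{\beta H_0}$ cancel, so $g_{\ge0}^\lambda=\lambda^{\alpha H_0}\,g_{\ge0}(\lambda^{-1}\zeta;t_\lambda,\bar t_{\lambda^{-1}})\,\lambda^{-\beta H_0}$, and by (ii) one has $g_{<0}^\lambda\in\exp\pi(\hat{\mathfrak{n}}_-)$, $g_{>0}^\lambda\in\exp\pi(\hat{\mathfrak{n}}_+)$, $g_0^\lambda\in\exp\pi(\hat{\mathfrak{h}}')$, so the Gauss-decomposition pattern is intact. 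Differentiating, say, $g_{<0}^\lambda$ in $t_n$: the chain rule produces a factor $\lambda^n$, which by (i) turns $H_n\big|_{\lambda^{-1}\zeta}$ into $H_n\big|_{\zeta}$; the constant conjugating matrices pass through the truncation by (ii) and fix $H_n$, so that
\[
\frac{\partial g_{<0}^\lambda}{\partial t_n}
=-\bigl(g_{<0}^\lambda H_n (g_{<0}^\lambda)^{-1}\bigr)_{<0}\,g_{<0}^\lambda ,
\]
which is the first equation of \eqref{homogSW}. The remaining three equations are treated the same way: for the $\bar t_n$-equations one uses the second half of (i), and in the $g_{\ge0}$-equations the extra right factor $\lambda^{-\beta H_0}$ commutes with the diagonal $H_{\pm n}$ sitting inside $g_{\ge0}H_{\pm n}g_{\ge0}^{-1}$ and hence creates no obstruction. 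The computation is routine; the one point that genuinely requires care --- and which I regard as the conceptual core --- is that conjugation by $\lambda^{\gamma H_0}$ and the dilation $\zeta\mapsto\lambda^{-1}\zeta$ both respect the homogeneous grading by powers of $\Lambda$, so that they commute with $(\,\cdot\,)_{<0}$ and $(\,\cdot\,)_{\ge0}$. Once that is granted, everything else is bookkeeping of powers of $\lambda$.
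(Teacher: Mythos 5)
Your argument is correct and is exactly the route the paper intends: the paper's "proof" is only the remark that the proposition "is verified by direct calculation," and your write-up carries out that calculation, correctly identifying the two facts that make it work (the cancellation of the $\lambda^{\pm n}$ factors from the time rescaling against $H_{\pm n}(\lambda^{-1}\zeta)=\lambda^{\mp n}H_{\pm n}(\zeta)$, and that conjugation by $\lambda^{\gamma H_0}$ and the $\zeta$-dilation preserve the $\Lambda$-grading and hence commute with $(\,\cdot\,)_{<0}$, $(\,\cdot\,)_{\ge 0}$). No gaps; your version is simply a more explicit account of the paper's direct verification.
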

This is verified by direct calculation.
Proposition \ref{propSR} shows 
that $g_{<0}, g_0$ and $g_{>0}$ have the scaling symmetry,
it thus makes sense 
to look only at functions with the 
following similarity conditions:
\begin{equation}
\label{similarity-reductions}
 g_{<0}^\lambda(\zeta, t, \bar{t}) = g_{<0}(\zeta, t, \bar{t}),
 \qquad
 g_0^\lambda(\zeta, t, \bar{t}) = g_0(\zeta, t, \bar{t}),
 \qquad
 g_{>0}^\lambda(\zeta, t, \bar{t}) = g_{>0}(\zeta, t, \bar{t}).
\end{equation}
The components of these matrices
satisfy, for example,
$e^{\phi(t_\lambda, \bar{t}_{\lambda^{-1}})} 
= \lambda^{\beta-\alpha}e^{\phi(t, \bar{t})}$,
\begin{alignat*}{2}
&\left\{
\begin{aligned}
   &q(t_\lambda, \bar{t}_{\lambda^{-1}}) = \lambda^{-2\alpha -1}q(t, \bar{t}),
\\ 
   &r(t_\lambda, \bar{t}_{\lambda^{-1}}) = \lambda^{2\alpha}r(t, \bar{t}),
\end{aligned}
\right.
&\qquad
&\left\{
\begin{aligned}
   &w_{21}(t_\lambda, \bar{t}_{\lambda^{-1}}) = \lambda^{-1}w_{21}(t, \bar{t}),
\\ 
   &w_{22}(t_\lambda, \bar{t}_{\lambda^{-1}}) = \lambda^{-1}w_{22}(t, \bar{t}), 
\end{aligned}
\right.
\\
&\left\{
\begin{aligned}
  &\bar{q}(t_\lambda, \bar{t}_{\lambda^{-1}}) = 
 \lambda^{-2\beta}\bar{q}(t, \bar{t}),
\\
  &\bar{r}(t_\lambda, \bar{t}_{\lambda^{-1}}) = 
 \lambda^{2\beta + 1}\bar{r}(t, \bar{t}),
\end{aligned}
\right.
&\qquad
&\left\{
\begin{aligned}  
&\bar{w}_{21}(t_\lambda, \bar{t}_{\lambda^{-1}}) 
= \lambda\bar{w}_{21}(t, \bar{t}), 
\\
  &\bar{w}_{22}(t_\lambda, \bar{t}_{\lambda^{-1}}) 
= \lambda\bar{w}_{22}(t, \bar{t}).
\end{aligned}
\right.
\end{alignat*}
The similarity conditions \eqref{similarity-reductions} can be expressed in
infinitesimal form:
\begin{align}
\label{hom-souji-2}
 \zeta\frac{\partial g_{<0}}{\partial \zeta}
&= [\alpha H_0, g_{<0}] 
+ \sum_n nt_n \frac{\partial g_{<0}}{\partial t_n}
- \sum_n n\bar{t}_n \frac{\partial g_{<0}}{\partial \bar{t}_n},
\\
\label{hom-sim-W0}
 0&= (\alpha H_0) g_0
+ \sum_n nt_n \frac{\partial g_0}{\partial t_n}
- \sum_n n\bar{t}_n \frac{\partial g_0}{\partial \bar{t}_n}
-g_0(\beta H_0),
\\
\label{hom-souji-2bar}
 \zeta\frac{\partial g_{>0}}{\partial \zeta}
&= [\beta H_0, g_{>0}] 
+ \sum_n nt_n \frac{\partial g_{>0}}{\partial t_n}
- \sum_n n\bar{t}_n \frac{\partial g_{>0}}{\partial \bar{t}_n}.
\end{align}
In particular, 
the variables $W_1$, $\bar{W}_1$, $W_2$ and $\bar{W}_2$ satisfy the 
following conditions
by virtue of the Sato-Wilson equations:
\begin{gather}
\label{hom-W1-sim}
\begin{aligned}
&\begin{bmatrix} 2\alpha+1 & 0 \\ 0 & -2\alpha  \end{bmatrix}W_1
= \sum_n nt_nU_{2n+1} + \sum_n n\bar{t}_n e^{2\Phi}\bar{U}_{2n-1},
\\
&\begin{bmatrix} -2\beta & 0 \\ 0 & 2\beta + 1  \end{bmatrix}\bar{W}_1
= \sum_n nt_n e^{-2\Phi}U_{2n-1} + \sum_n n\bar{t}_n\bar{U}_{2n+1}, 
\end{aligned}
\\
\label{hom-W2-sim}
\begin{aligned}
 W_2 &= \sum_n nt_n(U_{2n+2} + U_{2n+1}\Lambda W_1\Lambda^{-1})
+ \sum_n n\bar{t}_n(\bar{U}_{2n-2} + e^{2\Phi}\bar{U}_{2n-1}
\Lambda W_1 \Lambda^{-1}),
\\
 \bar{W}_2 &=
\sum_n nt_n (U_{2n-2} + e^{-2\Phi} U_{2n-1} \Lambda \bar{W}_1 \Lambda^{-1})
+\sum_n n\bar{t}_n (\bar{U}_{2n+2} + \bar{U}_{2n+1} \Lambda \bar{W}_1 \Lambda^{-1} ).
\end{aligned}
\end{gather}
We emphasize that the variables $W_2$ and $\bar{W}_2$ can
be described by differential polynomials of $q, r$ and $\bar{q}, \bar{r}$ (see
Proposition \ref{prop1}) under the similarity conditions 
\eqref{similarity-reductions}.
Furthermore, from the condition for $g_0$ \eqref{hom-sim-W0}, we obtain
\begin{equation}
\label{hom-W0bar-sim}
 \sum_n nt_n U_{2n} + \sum_n n\bar{t}_n\bar{U}_{2n}
= (\beta-\alpha) H_0.
\end{equation}

Under the similarity conditions \eqref{similarity-reductions},
the wave functions $Y = \Psi$ and $\bar{\Psi}$ \eqref{BAfunction} satisfy
\begin{equation}
\label{BAsim}
 Y(\lambda \zeta; t, \bar{t})
 = \lambda^{\alpha H_0} Y(\zeta; t_{\lambda}, \bar{t}_{\lambda^{-1}}),
\end{equation}
because of the relations $\Psi_0(\lambda \zeta; t)
= \Psi_0(\zeta; t_{\lambda})$,
$\bar{\Psi}_0(\lambda \zeta;  \bar{t})
 = \bar{\Psi}_0(\zeta;  \bar{t}_{\lambda^{-1}})$.
Formula \eqref{BAsim} implies the following linear equation:
\begin{align}
 \zeta\frac{\partial Y}{\partial \zeta} 
&= \alpha H_0Y 
+ \sum_n nt_n \frac{\partial Y}{\partial t_n}
- \sum_n n \bar{t}_n \frac{\partial Y}{\partial \bar{t}_n} 
\nonumber
\\
&= \left(
  \alpha H_0 + \sum_n nt_nB_n - \sum_n n \bar{t}_n (g_0 \bar{B}_n g_0^{-1}) 
\right) Y
\label{simBA}
\end{align}
for $Y = \Psi, \bar{\Psi}$.
From the compatibility condition of the 
linear equation \eqref{simBA} and \eqref{linearDE},
we get  an isomonodromic 
deformation equation \cite{FN}, \cite{JM2}:
\begin{equation}
\label{mpd}
 \frac{\partial A}{\partial t_i} = [B_i,A] + \zeta\frac{\partial B}{\partial \zeta},
\qquad
A=  \alpha H_0 + \sum_n nt_nB_n - \sum_n n \bar{t}_n (g_0 \bar{B}_n g_0^{-1}).
\end{equation}

\subsection{Affine Weyl group symmetry of the reduction parameters}

The affine Weyl group symmetry still exists after the similarity reduction.
In addition,
we have the following 
transformation for the parameters $\alpha$ and $\beta$ in
the similarity conditions \eqref{hom-souji-1}, \eqref{similarity-reductions}.
\begin{theorem}
The six basic transformations $\sigma, 
s_i^\mathrm{L}, s_i^\mathrm{R} (i=0,1)$ and $\pi$ induce
the following transformations of
the parameters $\alpha, \beta$ in
the similarity conditions \eqref{hom-souji-1}, \eqref{similarity-reductions}:
\begin{equation}
\label{table-parameter}
\begin{array}{c|cc}
  & \alpha & \beta 
\\ \hline
\sigma  & -\beta - \frac{1}{2} & -\alpha - \frac{1}{2}
\\
s_0^\mathrm{L}  & -\alpha - 1 & \beta
\\
s_1^\mathrm{L}  & -\alpha & \beta
 \\
 s_0^\mathrm{R}  &  \alpha & -\beta - 1
\\ 
 s_1^\mathrm{R}  & \alpha & -\beta
\\
 \pi  & -\alpha - \frac{1}{2} & -\beta - \frac{1}{2}
\end{array} 
\end{equation}
\end{theorem}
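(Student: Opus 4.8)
The plan is to track how each of the six basic transformations acts on the infinitesimal form of the similarity conditions, namely equations \eqref{hom-souji-2}, \eqref{hom-sim-W0}, and \eqref{hom-souji-2bar}, and read off the new values of $\alpha$ and $\beta$ from the transformed relations. The key observation is that the parameters $\alpha, \beta$ appear only through the exponents $\zeta^{\alpha H_0}$, $\zeta^{\beta H_0}$ in the definition \eqref{BAfunction} of the wave functions, equivalently through the commutator terms $[\alpha H_0, g_{<0}]$ and $[\beta H_0, g_{>0}]$ and the term $(\alpha H_0) g_0 - g_0 (\beta H_0)$ in \eqref{hom-sim-W0}. So I would work at the level of these conditions rather than with the nonlinear equations for $q, r, \bar q, \bar r$ directly.

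First I would handle $s_i^{\mathrm{L}}$. By definition \eqref{leftaction}, $s_i^{\mathrm{L}}$ conjugates $g_{<0}$ by the matrices $G_i^{\mathrm{L}}, S_i$ on the right and left, together with the sign flip $t \mapsto -t$. The sign flip sends $\sum_n n t_n \partial/\partial t_n$ to itself (the $n t_n$ and $\partial/\partial t_n$ each pick up a factor $-1$), so it does not change the parameters; what matters is the effect of $S_i$ and $G_i^{\mathrm{L}}$ on the $[\alpha H_0, \cdot]$ term. Since $S_1$ realizes the simple reflection that sends $H_0 \mapsto -H_0$ (in the level-zero picture, $S_1 \begin{bmatrix} 1 & 0 \\ 0 & -1 \end{bmatrix} S_1^{-1} = -\begin{bmatrix} 1 & 0 \\ 0 & -1 \end{bmatrix}$, and $H_0 = \tfrac12 H_0 \cdot \mathrm{something}$ — more precisely $H_0 \mapsto -H_0$ under $\mathrm{Ad}\,S_1$), conjugation by $S_1$ turns $\alpha$ into $-\alpha$, giving the row $s_1^{\mathrm{L}}: \alpha \mapsto -\alpha$. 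For $s_0^{\mathrm{L}}$ the operator $S_0$ contains an extra power of $\zeta$ (it realizes an affine reflection, which is the composition of $S_1$-type reflection with a lattice translation by the coroot), and this $\zeta$-shift contributes the extra $-1$, yielding $\alpha \mapsto -\alpha - 1$; in both cases $\beta$ is untouched because $s_i^{\mathrm{L}}$ acts trivially on the relevant $g_{>0}$ side up to the $\bar G_i^{\mathrm{L}}$ factors, which do not involve $\zeta^{\beta H_0}$. The right actions $s_i^{\mathrm{R}}$ are handled identically with the roles of $\alpha$ and $\beta$, of $g_{<0}$ and $g_{>0}$, and of $t$ and $\bar t$ interchanged, giving $s_0^{\mathrm{R}}: \beta \mapsto -\beta - 1$ and $s_1^{\mathrm{R}}: \beta \mapsto -\beta$, with $\alpha$ fixed.

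Next, for $\pi$ I would use \eqref{piaction}: conjugation by $P = \begin{bmatrix} 0 & \zeta^{-1/2} \\ -\zeta^{1/2} & 0 \end{bmatrix}$ sends $H_0 \mapsto -H_0$ and shifts the $\zeta$-grading by a half-integer, so that the scaling exponent $\alpha$ becomes $-\alpha - \tfrac12$; since $\pi$ acts symmetrically on $g_{<0}$ and $g_{>0}$ (same matrix $P$ on both), $\beta \mapsto -\beta - \tfrac12$ as well. For $\sigma$ I would use \eqref{sigmaaction}: $\sigma$ interchanges $t \leftrightarrow \bar t$, $g_{<0} \leftrightarrow g_{>0}$ (up to the sign on $\Phi$), and from the definition \eqref{hom-souji-1} one sees directly that the exponent $\alpha H_0$ attached to $g_{<0}$ gets swapped with $\beta H_0$ attached to $g_{>0}$; the half-integer shifts $-\tfrac12$ come from the fact that under $\sigma$ the variable $\zeta$ is inverted relative to the Heisenberg generators (note $H_n \mapsto H_{-n}$), which, combined with the $\zeta^{\alpha H_0}$ normalization, forces the additional $-\tfrac12$ — this is the same mechanism as in $\pi$. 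I expect the main obstacle to be bookkeeping the half-integer shifts consistently: one must be careful that $\sigma$ and $\pi$ involve $\zeta^{\pm 1/2}$ (through $P$ and through the $\zeta \mapsto \zeta^{-1}$ behavior), so the naive expectation $\alpha \leftrightarrow \beta$ or $\alpha \mapsto -\alpha$ must be corrected by $-\tfrac12$, whereas the reflections $s_i^{\mathrm{L}}, s_i^{\mathrm{R}}$ produce only integer shifts. Finally I would sanity-check the whole table against the relations \eqref{relation-qr} — e.g. $\sigma s_0^{\mathrm{L}} = s_1^{\mathrm{R}} \sigma$ should be consistent on the level of $(\alpha, \beta)$: starting from $(\alpha,\beta)$, $s_0^{\mathrm{L}}$ gives $(-\alpha-1, \beta)$, then $\sigma$ gives $(-\beta - \tfrac12, \alpha + \tfrac12)$; going the other way, $\sigma$ gives $(-\beta - \tfrac12, -\alpha - \tfrac12)$, then $s_1^{\mathrm{R}}$ negates $\beta$ to give $(-\beta - \tfrac12, \alpha + \tfrac12)$ — agreement, which confirms the entries.
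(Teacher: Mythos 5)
Your plan is essentially the paper's proof: the paper likewise reads off the new $(\alpha,\beta)$ from the compatibility of each transformation with the similarity conditions, using the scaling covariance of $S_i$ and of the matrices $G_i^{\mathrm{L/R}},\bar{G}_i^{\mathrm{L/R}}$ for $s_i^{\mathrm{L/R}}$, the relation $P(\lambda\zeta)=\lambda^{-\frac{1}{2}H_0}P(\zeta)$ for $\pi$, and, for $\sigma$, the sign flip of the shifted grading operator $\zeta\frac{\partial}{\partial\zeta}+\tfrac{1}{4}H_0$ (which is exactly what produces your extra $-\tfrac{1}{2}$). The one point to tighten when writing it up is the assertion that $\beta$ is untouched by $s_i^{\mathrm{L}}$ (and $\alpha$ by $s_i^{\mathrm{R}}$): this holds not because $\bar{G}_i^{\mathrm{L}}$ ``does not involve $\zeta^{\beta H_0}$,'' but because a short computation with the scaling weights of $q$, $r$, $\phi$ shows that $\bar{G}_i^{\mathrm{L}}$ transforms precisely by conjugation with $\lambda^{\beta H_0}$ (and $G_i^{\mathrm{L}}$ by $\lambda^{-\alpha H_0}(\cdot)\lambda^{-\alpha H_0}$, not by conjugation), which is what the paper verifies explicitly before reading off the table.
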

\begin{proof}
We rewrite the similarity condition \eqref{hom-souji-2} by
\begin{equation}
\label{zure}
\begin{aligned}
 \left[ \zeta\frac{\partial}{\partial \zeta} +\frac{1}{4}H_0, g_{<0}(t,\bar{t}) \right]
&= \left[(\alpha+\frac{1}{4}) H_0, g_{<0}(t,\bar{t}) \right] 
\\
&\quad
+ \sum_n nt_n \frac{\partial g_{<0}}{\partial t_n}(t,\bar{t})
- \sum_n n\bar{t}_n \frac{\partial g_{<0}}{\partial \bar{t}_n}(t,\bar{t}).
\end{aligned}
\end{equation}
If we let $\sigma$ \eqref{defofsigma} act on
the formula \eqref{zure}, 
then by using the relation $\left[ 
\zeta\frac{\partial}{\partial \zeta} +\frac{1}{4}H_0,  \Lambda^n \right]
= \frac{n}{2}\Lambda^n$,
we have 
\[
\begin{aligned}
 \left[ -\zeta\frac{d}{d \zeta} -\frac{1}{4}H_0, g_{>0}(\bar{t},t) \right]
&= \left[(\sigma(\alpha)+\frac{1}{4}) H_0, g_{>0}(\bar{t},t) \right] 
\\
&\quad
+ \sum_n nt_n \frac{\partial g_{>0}}{\partial \bar{t}_n}(\bar{t},t)
- \sum_n n\bar{t}_n \frac{\partial g_{>0}}{\partial t_n}(\bar{t},t).
\end{aligned}
\]
Comparing this equation with \eqref{hom-souji-2bar},
we have $\sigma(\alpha) = -\beta -\frac{1}{2}$.

Next we consider the actions of $s_i^\mathrm{L}$ and $s_i^\mathrm{R}$.
Since the generator of permutation $S_0=S_0(\zeta)$ \eqref{Weyls0s1} satisfies $S_0(\zeta)
= \lambda^{-H_0}S_0(\lambda^{-1}\zeta)$,
under the similarity conditions,
the matrices in the definition of the left action \eqref{Finfty} and 
the right action \eqref{GzeroGone} satisfy
\begin{align*}
&\left\{
\begin{aligned}
 &G_0^\mathrm{L}(\zeta;t,\bar{t})
= \lambda^{-(\alpha+1) H_0}
G_0^\mathrm{L}(\lambda^{-1}\zeta;t_\lambda,\bar{t}_{\lambda^{-1}})
\lambda^{-\alpha H_0},
\\
 &\bar{G}_0^\mathrm{L}(\zeta;t,\bar{t})
=\lambda^{\beta H_0}
\bar{G}_0^\mathrm{L}(\lambda^{-1}\zeta;t_\lambda,\bar{t}_{\lambda^{-1}})
\lambda^{-\beta H_0},
\end{aligned}
\right.
\\
&\left\{
\begin{aligned}
 &G_1^\mathrm{L}(t,\bar{t}) 
=\lambda^{-\alpha H_0}
G_1^\mathrm{L}(t_\lambda,\bar{t}_{\lambda^{-1}})
\lambda^{-\alpha H_0},
\\
 &\bar{G}_1^\mathrm{L}(t,\bar{t}) 
=\lambda^{\beta H_0}
\bar{G}_1^\mathrm{L}(t_\lambda,\bar{t}_{\lambda^{-1}})
\lambda^{-\beta H_0},
\end{aligned}
\right.
\\
&\left\{
\begin{aligned}
 &G_0^\mathrm{R}(\zeta;t,\bar{t})
=\lambda^{\alpha H_0}
G_0^\mathrm{R}(\lambda^{-1}\zeta;t_\lambda,\bar{t}_{\lambda^{-1}})
\lambda^{-\alpha H_0},
\\
 &\bar{G}_0^\mathrm{R}(\zeta;t,\bar{t})
= \lambda^{-(\beta+1) H_0}
\bar{G}_0^\mathrm{R}(\lambda^{-1}\zeta;t_\lambda,\bar{t}_{\lambda^{-1}})
 \lambda^{-\beta H_0},
\end{aligned}
\right.
\\
&\left\{
\begin{aligned}
 &
G_1^\mathrm{R}(t,\bar{t})
 = \lambda^{\alpha H_0} 
G_1^\mathrm{R}(t_\lambda,\bar{t}_{\lambda^{-1}})
 \lambda^{-\alpha H_0},
\\
 &\bar{G}_1^\mathrm{R}(t,\bar{t}) 
= \lambda^{-\beta H_0}
\bar{G}_1^\mathrm{R}(t_\lambda,\bar{t}_{\lambda^{-1}})
\lambda^{-\beta H_0}. 
\end{aligned}
\right.
\end{align*}
Then, for example, the actions of $s_0^\mathrm{L}$ on the matrices 
$g_{<0}$, $g_0$ and $g_{>0}$ with the
similarity condition \eqref{similarity-reductions} are
\begin{equation}
\label{hom-souji-Weyl}
\begin{aligned}
&s_0^{\mathrm{L}}(g_{<0})(\zeta; t, \bar{t}) 
= \lambda^{-(\alpha+1) H_0} 
s_0^{\mathrm{L}}(g_{<0})(\lambda^{-1}\zeta; t_\lambda, \bar{t}_{\lambda^{-1}}) 
\lambda^{(\alpha+1) H_0},
\\
&s_0^{\mathrm{L}}(g_0)(t, \bar{t}) =
   \lambda^{-(\alpha+1) H_0} 
 s_0^\mathrm{L}(g_0)(t_\lambda, \bar{t}_{\lambda^{-1}}) 
   \lambda^{-\beta H_0},
\\
&s_0^\mathrm{L}(g_{>0}^\lambda) (\zeta; t, \bar{t}) 
= \lambda^{\beta H_0} 
 s_0^\mathrm{L}(g_{>0})(\lambda^{-1}\zeta; t_\lambda, \bar{t}_{\lambda^{-1}}) 
\lambda^{-\beta H_0}.
\end{aligned}
\end{equation}
This relation imposes the action of $s_0^\mathrm{L}$ on $\alpha$ and $\beta$.
Other transformations can be computed in the same way.

The action of $\pi$ is obtained by the relation
$P(\lambda \zeta) = \lambda^{-\frac{1}{2}H_0}P(\zeta)
= P(\zeta) \lambda^{\frac{1}{2}H_0}$.
\end{proof}

\begin{corollary}
The shift operators on the $A_1^{(1)}$ root- and weight-lattice 
act on the parameters $\alpha, \beta$ by the following table:
$$
\begin{array}{c|cc}
  & \alpha & \beta 
\\ \hline
T_\mathrm{L} = s_0^\mathrm{L}s_1^\mathrm{L} &  \alpha + 1 & \beta 
\\
T_\mathrm{R} = s_0^\mathrm{R}s_1^\mathrm{R} & \alpha  & \beta + 1
\\
\mathcal{T}_1 
= s_0^\mathrm{R}s_0^\mathrm{L}\pi & \alpha + \frac{1}{2} & \beta + \frac{1}{2}
\\
\mathcal{T}_2 
= s_1^\mathrm{R}s_0^\mathrm{L}\pi & \alpha + \frac{1}{2} & \beta - \frac{1}{2}
\end{array}
$$
\end{corollary}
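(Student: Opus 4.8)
The plan is to read off the corollary directly from the preceding Theorem. Each shift operator $T_{\mathrm{L}}, T_{\mathrm{R}}, \mathcal{T}_1, \mathcal{T}_2$ is by definition a fixed word in the six basic transformations $\sigma, s_i^{\mathrm{L}}, s_i^{\mathrm{R}}, \pi$, so it suffices to compose the affine-linear maps on the pair $(\alpha,\beta)$ recorded in the table \eqref{table-parameter}. The single point to settle first is the order of composition: because the basic transformations act on solutions of the similarity-reduced hierarchy, the induced maps on $(\alpha,\beta)$ compose in the order \emph{opposite} to that of the operators; equivalently, when applying the rows of \eqref{table-parameter} one reads the defining word of the shift operator from left to right. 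This convention is the one compatible with Theorem~\ref{thm1}: the relation $\pi s_0^{\mathrm{L}} = s_1^{\mathrm{L}}\pi$ happens to be respected by either order, but $T_{\mathrm{L}} = s_0^{\mathrm{L}}s_1^{\mathrm{L}}$ forces the left-to-right reading, since the $s_0^{\mathrm{L}}$-row followed by the $s_1^{\mathrm{L}}$-row sends $(\alpha,\beta)\mapsto(-\alpha-1,\beta)\mapsto(\alpha+1,\beta)$, whereas the other order would give $\alpha-1$.

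Granting this, the root-lattice shifts are immediate: $T_{\mathrm{L}} = s_0^{\mathrm{L}}s_1^{\mathrm{L}}$ yields $(\alpha,\beta)\mapsto(\alpha+1,\beta)$ as just computed, and by the same steps $T_{\mathrm{R}} = s_0^{\mathrm{R}}s_1^{\mathrm{R}}$ yields $(\alpha,\beta)\mapsto(\alpha,\beta+1)$. For the weight-lattice shifts one proceeds in three steps. For $\mathcal{T}_1 = s_0^{\mathrm{R}}s_0^{\mathrm{L}}\pi$: the $s_0^{\mathrm{R}}$-row takes $(\alpha,\beta)$ to $(\alpha,-\beta-1)$, the $s_0^{\mathrm{L}}$-row then to $(-\alpha-1,-\beta-1)$, and the $\pi$-row finally to $(\alpha+\frac{1}{2},\beta+\frac{1}{2})$. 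For $\mathcal{T}_2 = s_1^{\mathrm{R}}s_0^{\mathrm{L}}\pi$: the $s_1^{\mathrm{R}}$-row gives $(\alpha,-\beta)$, the $s_0^{\mathrm{L}}$-row gives $(-\alpha-1,-\beta)$, and the $\pi$-row gives $(\alpha+\frac{1}{2},\beta-\frac{1}{2})$. These four results are exactly the rows of the asserted table.

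The only thing that demands care is keeping this composition order consistent throughout; there is no genuine obstacle, since all of the analytic content already sits in the preceding Theorem. As a final consistency check I would verify that the four maps obey the expected lattice relations among themselves — $(T_{\mathrm{L}})_{*}$ and $(T_{\mathrm{R}})_{*}$ commute, and $\mathcal{T}_1^2 = T_{\mathrm{L}}T_{\mathrm{R}}$, $\mathcal{T}_1\mathcal{T}_2^{-1} = T_{\mathrm{R}}$ hold at the level of the parameter maps — which they do, confirming that the chosen convention is internally coherent.
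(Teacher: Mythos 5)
Your row-by-row computation of the four parameter maps is correct and reproduces the table; the paper itself offers no separate argument for the corollary, so composing the rows of \eqref{table-parameter} is indeed the intended proof, and your arithmetic for $T_\mathrm{L}$, $T_\mathrm{R}$, $\mathcal{T}_1$, $\mathcal{T}_2$ checks out.

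The weak point is how you fix the composition convention. Your a priori assertion that the induced maps on $(\alpha,\beta)$ "compose in the order opposite to that of the operators" is not argued, and your fallback --- that $T_\mathrm{L}=s_0^\mathrm{L}s_1^\mathrm{L}$ "forces" the left-to-right reading because it yields $\alpha+1$ rather than $\alpha-1$ --- is circular: it presupposes precisely the row of the table being proved. Your closing consistency checks do not repair this, since reversing the reading order also satisfies the lattice relations (it merely anti-represents the group), so they cannot discriminate between the two conventions. The convention has to be pinned down from the paper's own definitions, independently of the corollary: for instance, the stated property $T_\mathrm{L}^k(\tau^{(i)}_{m,n})=\tau^{(i)}_{m+k,n}$ combined with table \eqref{Weylaction-tau} shows that words in the basic transformations are applied left to right (applying the $s_0^\mathrm{L}$-action first and then the $s_1^\mathrm{L}$-action sends $\tau^{(0)}_{m,n}$ to $\tau^{(0)}_{m+1,n}$, while the other order gives $m-1$); alternatively, a direct scaling check such as $T_\mathrm{L}(r)=1/\hat{q}$ with $\hat{q}=\tfrac12\partial_{t_1}q-q^2r$, whose scaling exponent under \eqref{similarity-reductions} is $2(\alpha+1)$, confirms $T_\mathrm{L}\colon\alpha\mapsto\alpha+1$ without reference to the corollary. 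With that one point supplied independently, your argument is complete and coincides with the paper's.
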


\section{The third Painlev\'e equation}

\subsection{Deriving the third Painlev\'e equation}

In this section,
we use only the independent variables $t_1$, $\bar{t}_1$ 
and the dependent variables $\bar{q}, \bar{r}$.
So from the mPLR equations \eqref{mPLR} 
we eliminate $q, r, e^\phi$ and obtain:
\begin{equation}
\label{qbarrbar}
\left\{
\begin{aligned}
 \frac{\partial^2 \bar{q}}{\partial t_1\partial \bar{t}_1}
 &= 4\bar{q} + 4\bar{q}\bar{r} \frac{\partial \bar{q}}{\partial t_1},
 \\
 \frac{\partial^2  \bar{r}}{\partial t_1\partial \bar{t}_1}
 &= 4\bar{r} -4\bar{q}\bar{r}  \frac{\partial \bar{r}}{\partial t_1}.
\end{aligned}
\right.
\end{equation}
Furthermore, from the
similarity conditions  
\eqref{hom-sim-W0},
\eqref{hom-souji-2bar},
we obtain
\begin{equation}
\label{similarityPIII}
\left\{
\begin{aligned}
 &\frac{\partial \bar{q}}{\partial \bar{t}_1}
= \frac{t_1}{\bar{t}_1}
  \frac{\partial \bar{q}}{\partial t_1}
 +\frac{2\beta}{\bar{t}_1} \bar{q},
\\
 &\frac{\partial \bar{r}}{\partial \bar{t}_1}
= \frac{t_1}{\bar{t}_1}
  \frac{\partial \bar{r}}{\partial t_1}
 -\frac{2\beta + 1}{\bar{t}_1} \bar{r},  
\end{aligned}
\right.
\qquad
4\bar{q}\bar{r} = \frac{2(\beta - \alpha)}{\bar{t}_1}
 +\frac{t_1}{\bar{t}_1}
 \frac{\partial \bar{q}}{\partial t_1}
 \frac{\partial \bar{r}}{\partial t_1}.
\end{equation}
The system of equations \eqref{qbarrbar}, \eqref{similarityPIII} is 
equivalent to the third Painlev\'e equation.
In fact,
by differentiating the first couple of equations in \eqref{similarityPIII} 
with respect to $t_1$ and eliminating $\displaystyle 
\frac{\partial \bar{q}}{\partial t_1 \partial \bar{t}_1},
\frac{\partial \bar{r}}{\partial t_1 \partial \bar{t}_1}$ and
$\bar{q}\bar{r}$ by 
\eqref{qbarrbar}, 
\eqref{similarityPIII},
we have
\begin{equation}
\label{qbarrbart1-2}
\left\{
\begin{aligned}
 \frac{\partial^2 \bar{q}}{\partial t_1^2}
=& \frac{4\bar{t}_1}{t_1}\bar{q} 
 -\frac{2\alpha + 1}{t_1}\frac{\partial \bar{q}}{\partial t_1} 
 +\left( \frac{\partial \bar{q}}{\partial t_1} \right)^2
 \frac{\partial \bar{r}}{\partial t_1}, 
\\
 \frac{\partial^2 \bar{r}}{\partial t_1^2}
=& \frac{4\bar{t}_1}{t_1}\bar{r} 
 +\frac{2\alpha}{t_1}\frac{\partial \bar{r}}{\partial t_1} 
 -\frac{\partial \bar{q}}{\partial t_1} 
\left( \frac{\partial \bar{r}}{\partial t_1} \right)^2.
\end{aligned}
\right.
\end{equation}
Now we define variables $y$ and $z$ by
\begin{equation}
\label{PIII-yz}
y = \frac{t_1}{\bar{q}}\frac{\partial \bar{q}}{\partial t_1},
\qquad
z = \frac{\bar{q}}{2}\frac{\partial \bar{r}}{\partial t_1} 
\end{equation}
and put $s := 2t_1$,
after which the system of equations \eqref{qbarrbart1-2} can be rewritten as
\begin{equation}
\label{painleve3}
\left\{
\begin{aligned}
 s\frac{dy}{ds}
 &= 2y^2z -y^2
 -2\alpha y + 2s\bar{t}_1,
\\
 s\frac{dz}{ds}
 &= -2yz^2
+ 2yz + 2\alpha z + \beta - \alpha.
\end{aligned}
\right.
\end{equation}
This is the Hamiltonian equation $\frac{dy}{ds} = \frac{\partial h}{\partial z}$,
$\frac{dz}{ds} = -\frac{\partial h}{\partial y}$ 
with the Hamiltonian
\begin{equation}
\label{P3hamiltonian}
 sh = y^2z^2 - y^2z -2\alpha yz + 2s\bar{t}_1z +(\alpha-\beta)y
\end{equation}
and the variable $y$ satisfies 
\begin{equation}
\label{P3prime}
\frac{d^2y}{ds^2} 
= \frac{1}{y}\left(\frac{dy}{ds}\right)^2
- \frac{1}{s}\frac{dy}{ds}
+ \frac{y^2}{s^2}(y+2\beta)
 +\frac{2\bar{t}_1(2\alpha+1)}{s}
 -\frac{4\bar{t}_1^2}{y}.
\end{equation}
When we put $\bar{t}_1 = 1/2$,
we have 
Painlev\'e III$'$ \eqref{PIII'} 
with the parameters $c_1 = 8\beta$,
$c_2 = 4(2\alpha+1)$,
$c_3 = 4$, $c_4 = -4$.
We remark that if we express
the Hamiltonian $h$ \eqref{P3hamiltonian} in the 
variable of the mPLR hierarchy,
we get
$$
\begin{aligned}
 h &= 2t_1q^2r^2 + 2\alpha qr
+2\bar{t}_1(q\bar{r}e^{-2\phi} + \bar{q}re^{2\phi})
\\
 &= -w_{21} + \bar{t}_1
 = \frac{1}{2}\frac{\partial}{\partial t_1}\log\tau_{00}^{(0)} + \bar{t}_1
\end{aligned}
$$
by the similarity condition \eqref{hom-W2-sim}.

The system of equations \eqref{qbarrbart1-2} is obtained
by the compatibility condition
of the linear equations
$$
 \zeta\frac{\partial \Psi}{\partial \zeta} = A\Psi,
 \qquad
 \frac{\partial \Psi}{\partial t_1} = B_1\Psi.
$$
with the coefficient matrices 
\begin{align}
 A &= -\alpha H_0 + t_1B_1  - \bar{t}_1\bar{B}_1
\nonumber \\
 &= \begin{bmatrix} -\bar{t}_1 & 2\bar{t}_1\bar{q}e^{2\phi} \\ 
0 & \bar{t}_1 \end{bmatrix}\zeta^{-1}
+ \begin{bmatrix} 
 -\alpha + 2t_1qr & -2t_1q \\ 
-2\bar{t}_1\bar{r}e^{-2\phi} & \alpha -2t_1qr \end{bmatrix}
+ \begin{bmatrix} t_1 & 0 \\ 2t_1r & -t_1 \end{bmatrix}\zeta,
\end{align}
and $B_1$ \eqref{B1},
see \eqref{mpd}.
We believe this is a new Lax formation of the third Painlev\'e equation.
Notice that the relations
\begin{equation}
\label{sim-yz-relation}
y =-2t_1\frac{q}{\bar{q}e^{2\phi}},
\qquad
z = \bar{q}re^{2\phi},
\qquad
yz = -2t_1qr = 2\bar{t}_1\bar{q}\bar{r} + \alpha - \beta
\end{equation}
hold.

\subsection{Bilinear equations}

The canonical coordinates of the third Painlev\'e 
equation $y$ and $z$ are described in the tau function as follows:
\begin{align}
\label{tau-to-yz}
y = t_1\frac{\partial}{\partial t_1}\log \frac{\tau^{(1)}_{0,-1}}{\tau^{(1)}_{0,0}}
= -2t_1\frac{\tau^{(0)}_{0,0}\tau^{(0)}_{1,0}}{\tau^{(1)}_{0,0}\tau^{(1)}_{0,-1}},
\qquad
z = -\frac{\tau^{(1)}_{0,-1}\tau^{(1)}_{-1,0}}{(\tau^{(0)}_{0,0})^2}.
\end{align}
\begin{proposition}
\label{prop4}
The four $\tau$-functions $\tau^{(0)}_{0,0}, \tau^{(0)}_{1,0}, 
\tau^{(1)}_{0,0}$, and $\tau^{(1)}_{0,-1}$ (see the diagram \eqref{weight-lattice})
in $\eqref{tau-to-yz}$ with
the similarity condition 
satisfy
the following relations:
\begin{align}
\label{tau-1}
t_1D_{t_1}\tau^{(0)}_{0,0}\cdot \tau^{(0)}_{1,0}
&= 2\bar{t}_1\tau^{(1)}_{0,0}\tau^{(1)}_{0,-1}
+(2\alpha + 1)\tau^{(0)}_{0,0}\tau^{(0)}_{1,0},
\\
 t_1^2D_{t_1}^2\tau^{(0)}_{0,0}\cdot \tau^{(0)}_{1,0}
&= -2t_1(\tau^{(0)}_{0,0})'\tau_{0,0}^{(0)}
+4(\alpha + 1) \bar{t}_1 \tau^{(1)}_{0,0}\tau^{(1)}_{0,-1}
\label{q''-tau-2}
\\
&\quad
+ (2\alpha + 1)(2\alpha + 2)
\tau^{(0)}_{0,0}\tau^{(0)}_{1,0},
\nonumber
\\
 t_1D_{t_1}\tau^{(1)}_{0,0}\cdot\tau^{(1)}_{0,-1}
 &= 2t_1\tau^{(0)}_{0,0}\tau^{(0)}_{1,0},
\label{rho-1}
\\
\label{qbar''-tau}
 t_1^2D_{t_1}^2 \tau^{(1)}_{0,0}\cdot \tau^{(1)}_{0,-1}
&= -2t_1(\tau^{(1)}_{0,0})'\tau^{(1)}_{0,-1}
 + 2(2\beta+1)t_1\tau^{(0)}_{0,0}\tau^{(0)}_{1,0}.
\end{align}
\end{proposition}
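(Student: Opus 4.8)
The plan is to translate each bilinear identity into a statement about the mPLR dependent variables via the $\tau$-function dictionary, and then verify it using the differential equations and similarity conditions already established. First I would record the relevant dictionary entries: from \eqref{qrqbarrbar-tau} we have $\bar q = -\tau^{(1)}_{0,-1}/\tau^{(1)}_{0,0}$, $q = -\tau^{(1)}_{1,0}/\tau^{(1)}_{0,0}$ (using the $i=1$ lattice and the action of $T_{\mathrm L},T_{\mathrm R}$), and from \eqref{Miura-tau}, \eqref{w-tau} the entries $\hat{\bar q}$, $\hat r$ and the logarithmic $t_1$-derivatives of $\tau^{(i)}_{0,0}$. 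The key observation is that, since $\tau^{(i)}_{m,n}$ are obtained from $\tau^{(i)}_{0,0}$ by applying $T_{\mathrm L}^m T_{\mathrm R}^n$ which do \emph{not} change the sign of $t_1,\bar t_1$ (this is noted explicitly before \eqref{SchlesingerB}), any relation proved for the ``base'' functions propagates to all lattice points; so it suffices to establish the four identities at a single vertex, say the ones actually appearing in \eqref{tau-to-yz}.

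The main work is \eqref{tau-1} and \eqref{rho-1}, the two first-order bilinear relations. For \eqref{rho-1}: expanding $t_1 D_{t_1}\tau^{(1)}_{0,0}\cdot\tau^{(1)}_{0,-1} = t_1\big((\tau^{(1)}_{0,0})'\tau^{(1)}_{0,-1} - \tau^{(1)}_{0,0}(\tau^{(1)}_{0,-1})'\big)$, dividing by $\tau^{(1)}_{0,0}\tau^{(1)}_{0,-1}$ turns the left side into $-t_1\,\partial_{t_1}\log(\tau^{(1)}_{0,-1}/\tau^{(1)}_{0,0})$, which by the first equality in \eqref{tau-to-yz} is exactly $-y$; meanwhile the right side becomes $2t_1\tau^{(0)}_{0,0}\tau^{(0)}_{1,0}/(\tau^{(1)}_{0,0}\tau^{(1)}_{0,-1})$, which by the second equality in \eqref{tau-to-yz} is also $-y$. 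So \eqref{rho-1} is just a rephrasing of the two expressions for $y$ in \eqref{tau-to-yz}; to complete it I would derive \eqref{tau-to-yz} itself from \eqref{PIII-yz} plus \eqref{qrqbarrbar-tau}, \eqref{w-tau}, \eqref{Miura-tau}. For \eqref{tau-1}: dividing by $\tau^{(0)}_{0,0}\tau^{(0)}_{1,0}$, the left side is $t_1\partial_{t_1}\log(\tau^{(0)}_{1,0}/\tau^{(0)}_{0,0}) = -t_1\,\partial_{t_1}\log q = -y\cdot(\text{since } y = t_1\partial_{t_1}\log\bar q$, need the companion $\partial_{t_1}\log q$); using $q = -\tau^{(1)}_{1,0}/\tau^{(1)}_{0,0} \cdot (\text{correction})$ and the bilinear relation \eqref{bilinear-t1} one reduces it to an identity equivalent to the second line of \eqref{similarityPIII} rewritten via \eqref{sim-yz-relation}: indeed $2\bar t_1\tau^{(1)}_{0,0}\tau^{(1)}_{0,-1}/(\tau^{(0)}_{0,0}\tau^{(0)}_{1,0})$ is, up to sign, $-2\bar t_1\bar q\bar r e^{2\phi}/(2t_1 q r)\cdot(\dots) = $ a multiple of $z$, and the constant term $(2\alpha+1)$ matches the last relation in \eqref{similarityPIII} once one substitutes $4\bar q\bar r$ from there and uses $yz = 2\bar t_1\bar q\bar r + \alpha-\beta$ from \eqref{sim-yz-relation}.

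For the second-order relations \eqref{q''-tau-2} and \eqref{qbar''-tau}, I would differentiate \eqref{tau-1} and \eqref{rho-1} once more with respect to $t_1$. Differentiating the identity $t_1 D_{t_1}\tau^{(1)}_{0,0}\cdot\tau^{(1)}_{0,-1} = 2t_1\tau^{(0)}_{0,0}\tau^{(0)}_{1,0}$ gives a relation involving $t_1^2 D_{t_1}^2$, a term $t_1 D_{t_1}$ (which feeds back via \eqref{rho-1}), plus the $t_1$-derivative of $\tau^{(0)}_{0,0}\tau^{(0)}_{1,0}$; the latter is handled by \eqref{tau-1} and by the bilinear equation \eqref{bilinear-t1} relating $D_{t_1}\tau^{(0)}\cdot\tau^{(1)}$ to products of shifted $\tau$'s, together with the logarithmic-derivative relations \eqref{w-tau}. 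The appearance of the $-2t_1(\tau^{(1)}_{0,0})'\tau^{(1)}_{0,-1}$ term on the right of \eqref{qbar''-tau} is precisely the non-polynomial $w$-type contribution, consistent with the remark after \eqref{w-tau} that $W_2$ is not a differential polynomial in $q,r$. Getting the numerical coefficients $4(\alpha+1)\bar t_1$, $(2\alpha+1)(2\alpha+2)$, and $2(2\beta+1)t_1$ exactly right, while tracking signs through the several sign-reversing substitutions $t\mapsto -t$ hidden in the definitions of $s_i^{\mathrm L}$, $s_i^{\mathrm R}$, $\pi$, is the step I expect to be the main obstacle; it is purely computational but error-prone, and I would organize it by first proving everything for $\tau^{(i)}_{0,0}$ and only then invoking \eqref{tau-mn-real} and the $T_{\mathrm L},T_{\mathrm R}$-covariance to pass to the four specific vertices.
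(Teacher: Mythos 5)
Your handling of the two first-order relations is essentially the paper's: \eqref{rho-1} is the hierarchy bilinear identity encoding $\partial\bar{q}/\partial t_1=-2qe^{-2\phi}$ (the paper obtains it from \eqref{6-siki-tau} by the shift $\mathcal{T}_2$), and \eqref{tau-1} is the similarity condition for $q$, namely $t_1\,\partial_{t_1}q-\bar{t}_1\,\partial_{\bar{t}_1}q=-(2\alpha+1)q$ coming from \eqref{hom-W1-sim} (eq.~\eqref{q'sim}), rewritten through the dictionary \eqref{qrqbarrbar-tau} and $e^{2\phi}=(\tau^{(1)}_{0,0}/\tau^{(0)}_{0,0})^2$. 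Note that it is this condition for $q$, not the conditions for $\bar{q},\bar{r}$ in \eqref{similarityPIII} that you invoke, which does the work; and your use of \eqref{tau-to-yz} for \eqref{rho-1} must indeed be de-circularized as you acknowledge, since the second equality of \eqref{tau-to-yz} \emph{is} \eqref{rho-1}.

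The genuine gap is in your plan for the second-order relations \eqref{q''-tau-2}, \eqref{qbar''-tau}. Differentiating $t_1D_{t_1}\tau\cdot\tau$ identities in $t_1$ yields combinations $f''g-fg''$, not the Hirota combination $f''g-2f'g'+fg''$; to produce $D_{t_1}^2$ you must also control the second logarithmic derivatives of the \emph{individual} tau functions. These are supplied either by the Toda-type relations (which drag in neighboring tau functions such as $\tau^{(1)}_{-1,0}$, $\tau^{(0)}_{0,1}$ outside your chosen four), or, as the paper does, by the similarity condition for $W_2$ \eqref{hom-W2-sim} in the form \eqref{souji-00}, combined with the similarity condition for $g_0$ \eqref{hom-W0bar-sim} in the form \eqref{alpha-beta} to eliminate the extra lattice points. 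Moreover, none of the inputs you list for this step --- \eqref{tau-1}, \eqref{rho-1}, \eqref{bilinear-t1}, \eqref{w-tau} --- contains the parameter $\beta$, so the coefficient $2(2\beta+1)$ in \eqref{qbar''-tau} cannot emerge from them; a $\beta$-bearing similarity condition (that for $\bar{W}_1$ in \eqref{hom-W1-sim}, or \eqref{alpha-beta} with its $\alpha-\beta$) is indispensable. Finally, the difficulty you anticipate from the sign reversals $t\mapsto -t$ hidden in $s_i^{\mathrm{L}}, s_i^{\mathrm{R}},\pi$ is not where the work lies: the proof uses only the sign-preserving shifts, no reflections, so the real computational content is exactly the elimination step just described.
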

\begin{proof}
The equation \eqref{rho-1} is nothing but \eqref{6-siki-tau} (apply $\mathcal{T}_2$).
From the similarity condition \eqref{hom-W1-sim} we have
\begin{equation}
\label{q'sim}
 t_1\frac{\partial q}{\partial t_1} - \bar{t}_1\frac{\partial q}{\partial \bar{t}_1}
 = -(2\alpha + 1)q.
\end{equation}
Translating this into the $\tau$-functions,
we get \eqref{tau-1}.

The equations \eqref{q''-tau-2} and \eqref{qbar''-tau} are obtained
by differentiating the equation \eqref{q'sim} with respect to $t_1$,
together with the equation
\begin{equation}
2t_1 \tau^{(0)}_{1,0} \tau^{(1)}_{-1,0}
+ 2\bar{t}_1 \tau^{(0)}_{0,1} \tau^{(1)}_{0,-1} = 
(\alpha - \beta)\tau^{(0)}_{0,0}\tau^{(1)}_{0,0},
\label{alpha-beta}
\end{equation}
which is obtained from the similarity condition for
$e^\phi$ \eqref{hom-W0bar-sim},
and
\begin{equation}
\label{souji-00}
\begin{aligned}
 \frac{D_1^2\tau^{(0)}_{0,0}\cdot \tau^{(0)}_{0,0}}{(\tau^{(0)}_{0,0})^2}
 =& -\frac{2}{t_1}\frac{(\tau^{(0)}_{0,0})'}{\tau^{(0)}_{0,0}}
 -\frac{8\bar{t}_1}{t_1} \frac{\tau^{(1)}_{0,-1}\tau^{(1)}_{-1,0}}
{(\tau^{(0)}_{00})^2} -\frac{4\bar{t}_1}{t_1},
\\
 \frac{D_1^2\tau^{(1)}_{0,0}\cdot \tau^{(1)}_{0,0}}{(\tau^{(1)}_{0,0})^2}
 =& -\frac{2}{t_1}\frac{(\tau^{(1)}_{0,0})'}{\tau^{(1)}_{0,0}}
 -\frac{8\bar{t}_1}{t_1} \frac{\tau^{(0)}_{1,0}\tau^{(0)}_{0,1}}
{(\tau^{(1)}_{0,0})^2} -\frac{4\bar{t}_1}{t_1},
\end{aligned}
\end{equation}
which is obtained from the similarity condition for $W_2$ \eqref{hom-W2-sim}.
\end{proof}
The bilinear form for the third Painlev\'e equation \eqref{P3prime}
\begin{equation}
\label{P3bilinear}
\begin{aligned}
 &t_1^2 \left( 
\frac{D_1^2\tau_{0,0}^{(0)}\cdot \tau_{1,0}^{(0)}}
{\tau_{0,0}^{(0)}\tau_{1,0}^{(0)}}
 - \left( \frac{D_1 \tau_{0,0}^{(0)}\cdot \tau_{1,0}^{(0)}}
{\tau_{0,0}^{(0)}\tau_{1,0}^{(0)}} \right)^2
 -\frac{D_1^2\tau^{(1)}_{0,0}\cdot \tau^{(1)}_{0,-1}}
{\tau^{(1)}_{0,0}\tau^{(1)}_{0,-1}}
 + \left( \frac{D_1 \tau^{(1)}_{0,0}\cdot \tau^{(1)}_{0,-1}}
{\tau^{(1)}_{0,0}\tau^{(1)}_{0,-1}} \right)^2 \right)
\\
&\quad + t_1\left(
\frac{D_1 \tau_{0,0}^{(0)}\cdot \tau^{(1)}_{0,-1}}
{\tau_{0,0}^{(0)}\tau^{(1)}_{0,-1}} 
 + \frac{D_1 \tau_{1,0}^{(0)}\cdot \tau^{(1)}_{0,0}}
{\tau_{1,0}^{(0)}\tau^{(1)}_{0,0}}  \right)
\\
& =4t_1^2\left( \frac{\tau_{0,0}^{(0)}\tau_{1,0}^{(0)}}
{\tau^{(1)}_{0,0}\tau^{(1)}_{0,-1}} \right)^2
- 4\beta t_1 \frac{\tau_{0,0}^{(0)}\tau_{1,0}^{(0)}}{\tau^{(1)}_{0,0}\tau^{(1)}_{0,-1}}
- 2(2\alpha + 1) \bar{t}_1 \frac{\tau^{(1)}_{0,0}\tau^{(1)}_{0,-1}}
{\tau_{0,0}^{(0)}\tau_{1,0}^{(0)}}
- 4\bar{t}_1^2\left( \frac{\tau^{(1)}_{0,0}\tau^{(1)}_{0,-1}}
{\tau_{0,0}^{(0)}\tau_{1,0}^{(0)}} \right)^2
\end{aligned}
\end{equation}
can be derived from Proposition \ref{prop4}.

\subsection{Affine Weyl group symmetry of P$_{\mathrm{III}'}$}

The action of the six basic transformations 
on the parameters $\alpha$ and $\beta$ is given by \eqref{table-parameter}.
We now consider the action on the
variables $y$ and $z$ \eqref{PIII-yz},
\eqref{sim-yz-relation}.
The similarity condition
discussed above leads to the following assumption 
for the solutions of the mPLR equation:
\begin{equation}
\label{simqrbar1->2}
 \bar{q}(t_1, \bar{t}_1) = (2\bar{t}_1)^{2\beta}\bar{Q}(\xi),
 \qquad
 \bar{r}(t_1, \bar{t}_1) = (2\bar{t}_1)^{-2\beta-1} \bar{R}(\xi),
\end{equation}
where $\xi = 2t_1\bar{t}_1$.
The functions $\bar{Q}(\xi)$ and $\bar{R}(\xi)$ satisfy 
\begin{equation}
\label{QbarRbar}
\left\{
\begin{aligned}
 \frac{d^2 \bar{Q}}{d\xi^2}
=& \frac{\bar{Q}}{\xi} 
- \frac{2\alpha + 1}{\xi}\frac{d \bar{Q}}{d\xi} 
 + 2\left( \frac{d \bar{Q}}{d\xi} \right)^2
 \frac{d\bar{R}}{d\xi},
\\
 \frac{d^2 \bar{R}}{d\xi^2}
=& \frac{\bar{R}}{\xi} 
 +\frac{2\alpha}{\xi}\frac{d\bar{R}}{d\xi} 
 -2\frac{d\bar{Q}}{d\xi} 
\left( \frac{d\bar{R}}{d\xi} \right)^2.
\end{aligned}
\right.
\end{equation}
Equations \eqref{QbarRbar} are a rewriting of the equations \eqref{qbarrbart1-2}.
Then the variables $y$ and $z$ 
can be expressed
\begin{equation}
\label{sim-yz}
 y(t_1,\bar{t}_1) 
= \frac{2t_1\bar{t}_1}{\bar{Q}(2t_1\bar{t}_1)}
\frac{d\bar{Q}}{d\xi}(2t_1\bar{t}_1),
 \qquad
 z(t_1,\bar{t}_1) = 
\bar{Q}(2t_1\bar{t}_1)\frac{d\bar{R}}{d\xi}(2t_1\bar{t}_1).
\end{equation}
Hereafter we regard the variables $y$ and $z$ as a function 
of $\xi = 2t_1t_2$.

The actions of $\sigma$ and $\pi$ are given by 
\begin{align}
\label{sim-pi-sigma}
&\left\{
\begin{aligned}
 \sigma(y) &= \frac{2\xi}{y},
 \\
 \sigma(z) &= \frac{y(\alpha - \beta -yz)}{2\xi},
\end{aligned}
\right.
\qquad
\left\{
\begin{aligned}
 \pi(y) 
&= \frac{2\xi z}{\beta-\alpha+yz},
 \\
 \pi(z) 
&=\frac{y(\alpha - \beta - yz)}{2\xi}
\end{aligned}
\right.
\end{align}
because of the
definition and the relations \eqref{sim-yz-relation}.
We note that 
the composition $\pi \sigma$ is 
simpler than $\pi$: 
$$
\left\{
\begin{aligned}
 \pi\sigma(y) 
&= y + \frac{\beta-\alpha}{z},
 \\
 \pi\sigma(z) &= z.
\end{aligned}
\right.
$$
The other transformations 
act on $y$ and $z$ as follows:
\begin{align*}
&\left\{
\begin{aligned}
s_0^{\mathrm{L}}(y)(\xi) &=  y(-\xi),
\\
s_0^{\mathrm{L}}(z)(\xi) &= z(-\xi) 
- \frac{2\alpha+1}{y(-\xi)} + \frac{2\xi}{y^2(-\xi)},
\end{aligned} 
\right.
\\
&\left\{
\begin{aligned}
s_1^{\mathrm{L}}(y)(\xi) 
&=  \frac{2\xi z(-\xi)(1-z(-\xi))}
{\beta - \alpha + 2\alpha z(-\xi) -y(-\xi)z(-\xi) - y(-\xi)z^2(-\xi)},
 \\
s_1^{\mathrm{L}}(z)(\xi) &=  1-z(-\xi),
\end{aligned} 
\right.
\\
&\left\{
\begin{aligned}
s_0^{\mathrm{R}}(y)(\xi)
&= \frac{2\xi + y(-\xi)(\beta - \alpha + y(-\xi)z(-\xi))}
{\beta - \alpha + y(-\xi)z(-\xi) -(2\beta + 1) 
+ \frac{2\xi}{\beta - \alpha + y(-\xi)z(-\xi)}},
\\
s_0^{\mathrm{R}}(z)(\xi) 
&= z(-\xi) -\frac{(2\beta + 1)z(-\xi)}{\beta - \alpha + y(-\xi)z(-\xi)}
+ 2\xi\left(\frac{z(-\xi)}{\beta - \alpha + y(-\xi)z(-\xi)}\right)^2,
\end{aligned} 
\right.
\\
&\left\{
\begin{aligned}
s_1^{\mathrm{R}}(y)(\xi) &=  -y(-\xi),
 \\
s_1^{\mathrm{R}}(z)(\xi) &= 1-z(-\xi).
\end{aligned} 
\right.
\end{align*}
These are shown by the relation \eqref{sim-yz-relation} and
the similarity conditions.
For instance,
\begin{align*}
s_0^\mathrm{L}(r) 
= -\hat{q}
&= \frac{1}{2}\frac{\partial q}{\partial t_1} - q^2r
\\
&= q^2r + \frac{1}{t_1}\left( \alpha + \frac{1}{2} \right)q
+ \frac{\bar{t}_1}{t_1} \bar{q}e^{2\phi} 
\end{align*}
by the similarity condition \eqref{q'sim}.

Here we give a correspondence between the birational transformations
discussed in \cite{TOS} and in this paper.
In \cite{TOS},
the $A_1^{(1)}$ root systems $\{ \alpha_0, \alpha_1 \}$ and $\{
\beta_0, \beta_1 \}$ are expressed as
$$
(\alpha_0, \alpha_1)
= (1+\alpha-\beta, -\alpha + \beta),
\qquad
(\beta_0, \beta_1)
= (1+\alpha+\beta,  -\alpha - \beta).
$$
So the 
action of the extended affine Weyl group on these root systems
are realized as follows:
$$
 \begin{array}{c|cccc}
 & \alpha_0 & \alpha_1 & \beta_0 & \beta_1 
\\ 
\hline
s_1^\mathrm{R}s_0^\mathrm{R}
s_0^\mathrm{L}s_1^\mathrm{L}\pi\sigma
 & -\alpha_0 & \alpha_1 + 2\alpha_0 & \beta_0 & \beta_1 
\\ 
\pi \sigma  & \alpha_0 + 2\alpha_1 & -\alpha_1 & \beta_0 & \beta_1 
\\ 
s_0^\mathrm{R}s_0^\mathrm{L}\pi\sigma 
& \alpha_0 & \alpha_1 & -\beta_0 & \beta_1 + 2\beta_0 
\\ 
s_1^\mathrm{R}s_1^\mathrm{L}\pi\sigma 
& \alpha_0 & \alpha_1 & \beta_0 + 2\beta_1 & -\beta_1 
\\ 
\hline
s_1^\mathrm{R} s_0^\mathrm{L} \sigma & 
\alpha_1 & \alpha_0 & \beta_0 & \beta_1 
\\ 
\sigma & \alpha_0 & \alpha_1 & \beta_1 & \beta_0 
\\ 
\hline
s_1^\mathrm{R} & \beta_0 & \beta_1 & \alpha_0 & \alpha_1 
\\ 
s_0^\mathrm{L} & \beta_1 & \beta_0 & \alpha_1 & \alpha_0 
\end{array}
$$
The action on the variable $y$ and $z$ are also given by 
calculating these compositions.

\section{The fourth and the second Painlev\'e equations}

Finally,
we discuss the relation between 
the mPLR hierarchy and the fourth and the second Painlev\'e equations.

\subsection{Painlev\'e IV}

As we have shown in \cite{KK1},
if we use the variables $t_1$ and $t_2$,
we can obtain the fourth Painlev\'e equation.
Here we give additional information about $\tau$-functions and Hamiltonian functions.
In this case, the
similarity conditions for $W_1$ \eqref{hom-W1-sim} and $g_0$ \eqref{hom-W0bar-sim} are
\begin{align*}
 &t_1q + 2t_2\left( \frac{q''}{2} - 2q^2r' - 4q^3r^2 \right)
= -(2\alpha + 1) q, 
\\
&2t_1qr + 2t_2\left( q'r-qr'-2q^2r^2 \right)
= \beta - \alpha.
\end{align*}
So, the bilinear equations of these relations are
\begin{align}
&\left\{
\begin{aligned}
&(t_1D_{t_1} -t_2D_{t_1}^2 
-2\alpha -1 )\tau^{(0)}_{0,0} \cdot \tau^{(0)}_{1,0} = 0,
\\
&(t_1D_{t_1} - t_2D_{t_1}^2 + \alpha - \beta)
\tau^{(0)}_{0,0} \cdot \tau^{(1)}_{0,0} = 0.
\end{aligned}
\right.
\end{align}
When we put $t_2 = 1/2$ and
\begin{align*}
 y &= 2qr 
= -\frac{\tau^{(0)}_{1,0}\tau^{(1)}_{-1,0}}{\tau^{(0)}_{0,0}\tau^{(1)}_{0,0}}
= \frac{(\tau^{(1)}_{0,0})'}{\tau^{(1)}_{0,0}} 
- \frac{(\tau^{(0)}_{0,0})'}{\tau^{(0)}_{0,0}},
\\
 z &= 2t_1 + \frac{q'}{q}
 = 2t_1 + \frac{(\tau^{(0)}_{1,0})'}{\tau^{(0)}_{0,0}} 
- \frac{(\tau^{(0)}_{0,0})'}{\tau^{(0)}_{0,0}},
\end{align*}
we obtain 
$$
\left\{
\begin{aligned}
 y' &= y(2z-y-2t_1) +2(\alpha - \beta),
 \\
 z' &= z(2y-z+2t_1) - 4\beta.
\end{aligned}
\right.
$$
This system of equations is a Hamiltonian system
with polynomial Hamiltonian 
\begin{align*}
 H  = yz(z-y-2t_1) + 2(\beta - \alpha)z+4\beta y.
\end{align*}
By the similarity condition for $W_2$ \eqref{hom-W2-sim}, this Hamiltonian function is
expressed by
$$
 H=4t_1(\alpha -\beta)
 -8w_{21}.
$$

\subsection{Painlev\'e II}

We believe this result is new.
There are two steps to get the second Painlev\'e equation.
First,
we consider the similarity condition
with respect to $t_1$ and $t_3$.
The $t_3$-derivatives of $\bar{q}$ (not $q$) and $\phi$ are
\begin{equation*}
\begin{aligned}
 \frac{\partial \bar{q}}{\partial t_3} 
&= e^{-2\phi}u_{51} 
= -e^{-2\phi}\left(\frac{q''}{2} -2q^2r' - 4q^3r^2 \right),
\\
 \frac{\partial \phi}{\partial t_3}
&= u_{61} 
= \frac{1}{2}(q''r+qr''-q'r')-4q^3r^3.
\end{aligned}
\end{equation*}
So, the similarity conditions for $\bar{W}_1$ \eqref{hom-W1-sim} and
for $g_0$ \eqref{hom-W0bar-sim} are
\begin{align}
\label{sim-qbarP2}
 t_1\frac{\partial \bar{q}}{\partial t_1}
+3t_3\frac{\partial \bar{q}}{\partial t_3} 
=& -e^{-2\phi}\left( 2t_1q +3t_3\left( \frac{q''}{2} -2q^2r' - 4q^3r^2 \right)\right)
= -2\beta\bar{q},
\\
\label{sim-phiP2}
t_1\frac{\partial \phi}{\partial t_1}
+3t_3\frac{\partial \phi}{\partial t_3}
=& 2t_1qr + 3t_3\left( \frac{1}{2}(q''r+qr''-q'r')-4q^3r^3 \right)
= \beta - \alpha.
\end{align}
Hence, if we put $t_3 = 1/3$ and
\begin{equation}
\begin{aligned}
y &= -\frac{q'}{2q} 
= \frac{1}{2}\left(  \frac{(\tau^{(0)}_{0,0})'}{\tau^{(0)}_{00}} 
- \frac{(\tau^{(0)}_{1,0})'}{\tau^{(0)}_{1,0}} \right),
\\
z &
= 2q\hat{r} 
= -qr' - 2q^2r^2
= 2\frac{\tau^{(0)}_{1,0}\tau^{(0)}_{-1,0}}{(\tau^{(0)}_{0,0})^2},
\end{aligned}
\end{equation}
then the equations \eqref{sim-qbarP2} and \eqref{sim-phiP2} can
be rewritten as
\begin{equation}
\left\{
\begin{aligned}
y' &= 2y^2 + 2z + 2t_1 -2\beta \frac{e^{2\phi}\bar{q}}{q},
\\
z' &= -4yz +2\alpha +2\beta\left(2e^{2\phi}\bar{q}r - 1\right).
\end{aligned}
\right.
\end{equation}
Secondly,
if we put $\beta = 0$ and $s=2t_1$,
we have
\begin{equation}
\left\{
\begin{aligned}
\frac{dy}{ds}
=& y^2 + z + \frac{s}{2},
\\
\frac{d z}{d s}
=& -2yz +\alpha.
\end{aligned}
\right.
\end{equation}
This is a Hamiltonian form of the second Painlev\'e euqation
with Hamiltonian
$$
 H = \frac{z^2}{2}  + y^2z + \frac{t}{2}z -\alpha y.
$$
We remark that the equation \eqref{sim-qbarP2} can be described as
\begin{equation}
\label{P2bi-1}
\left( 2t_1 + \frac{3}{2}t_3D_{t_1}^2 \right) 
\tau^{(0)}_{0,0}\cdot \tau^{(0)}_{1,0} 
= 2\beta \tau^{(1)}_{0,0}\tau^{(1)}_{0,-1},
\end{equation}
and the similarity condition for $q$:
$$
 t_1\frac{\partial q}{\partial t_1}
+3t_3\frac{\partial q}{\partial t_3}
= t_1q' + 3t_3\left( \frac{q'''}{4} -3qq'r'-6q^2r^2q' \right)
= -(2\alpha + 1) q 
$$
is expressed as
\begin{equation}
\label{P2bi-2}
\left(
t_1D_{t_1} + \frac{3}{4}t_3D_{t_1}^3 
-2\alpha -1 \right)
\tau^{(0)}_{0,0} \cdot \tau^{(0)}_{1,0} = 0.
\end{equation}
The system of equations \eqref{P2bi-1},
\eqref{P2bi-2} with $\beta=0$ is a bilinear form 
of the second Painlev\'e equation.

\section*{Acknowledgments}

The author would like to thank Professors 
S.~Kakei,   
K.~Takasaki, 
T.~Tsuda,
and R.~Willox 
for their
several helpful comments.
This work was partially supported by 
the Global COE Program of University of Tokyo:
The research and training center for new development in mathematics.


\end{document}